\documentclass[11pt]{article}

\usepackage[letterpaper,margin=0.78in]{geometry}
\usepackage[T1]{fontenc}
\usepackage{lmodern}
\usepackage{microtype}
\usepackage{amsmath,amssymb,amsthm,mathtools,bm}
\usepackage{graphicx}
\usepackage{booktabs,tabularx,array}
\usepackage{xcolor}
\usepackage{enumitem}
\usepackage{caption}
\usepackage{hyperref}
\usepackage{float}
\usepackage{placeins}
\usepackage{hyperref}
\usepackage{xurl}

\hypersetup{
  colorlinks=true,
  linkcolor=blue!55!black,
  citecolor=blue!55!black,
  urlcolor=blue!65!black
}
\graphicspath{{figures/}}
\setlist[itemize]{leftmargin=1.6em,itemsep=2pt,topsep=3pt}
\newtheorem{theorem}{Theorem}

\newtheorem{lemma}[theorem]{Lemma}
\newtheorem{corollary}[theorem]{Corollary}
\theoremstyle{definition}

\newcommand{\D}{D}
\newcommand{\TV}{D_{\mathrm{TV}}}
\newcommand{\Tr}{\operatorname{Tr}}
\newcommand{\cM}{C_{\mathrm{MERA}}}
\newcommand{\cU}{\overline C_{\mathrm{MERA}}}
\newcommand{\Ftop}{F_{\mathrm{top}}}
\newcommand{\ketbra}[2]{\lvert #1\rangle\!\langle #2\rvert}
\newcommand{\mean}[1]{\overline{#1}}

\newcolumntype{Y}{>{\raggedright\arraybackslash}X}

\title{\bfseries Terminal-Register Certification for Finite-Measurement Learning of Multiscale Quantum States}

\author{%
  Bhavin Makwana\textsuperscript{1},
  Kashyap Patel\textsuperscript{1},
  Manjunath Joshi\textsuperscript{1},
  and Jaideep Mulherkar\textsuperscript{2}\\[0.6em]
  \textsuperscript{1}Dhirubhai Ambani University, Gandhinagar, Gujarat, India\\
  \textsuperscript{2}Georgia Institute of Technology, Atlanta, Georgia, USA
}

\begin{document}
\maketitle

\begin{abstract}

Structured quantum-state learning not only depends on an expressive ansatz but also on an operational certificate that stays meaningful with finite measurements and imperfect implementation. We study pure one dimensional states learning by an inverse binary multiscale entanglement renormalization ansatz (MERA). In the learning procedure, the qubits removed during coarse graining are controlled coherently and measured together at the terminal register. In the case of a fixed analysis circuit, we establish that the exact identity $F=p_0F_{top}$ between reconstruction fidelity, conditional top state fidelity, and terminal all zero probability. This identity produces layerwise, global, and local discarded-weight certificates, coexisting final shot bounds, and observable error guarantees. We confirm that an ideal sequential and terminal measurement schedule delivers the same complete bit string distribution under matched causal operations, while normalized postselection can amplify perturbations inversely with prefix acceptance. A noise aware theorem introduces an individual calibrated total variation implementation budget to the finite shot certificate. The protocol is estimated on an open boundary transverse field Ising ground state. A frozen 8-qubit schedule using $560$ million simulated training measurements per run achieves fidelity above $0.99$ in all $60$ held-out runs, with a mean fidelity of $0.996886$. 1080 circuit-noise cells and 6480 confidence-coverage rows are covered by fixed-circuit robustness validation without a locked soundness violation. We then address architectural fairness at $n=16$ using three new studies. A 255-coordinate MERA-PM225 and MPS have statistically unresolved fidelity and energy differences at the largest exposure under approximately matched aggregate $SU(4)$-shot-gate exposure. Experiments demonstrate that MERA retains lower long-range connected-$ZZ$ error ($\Delta=-0.01458$, 97.5\% interval $[-0.02811,-0.00034]$) and lower half-chain-entropy error ($\Delta=-0.03044$, interval $[-0.04878,-0.01044]$). In a 120-run exact-gradient multistart diagnostic, MERA has higher fidelity in 58/60 paired restarts and lower long-range error in 60/60, although no run met the prespecified stationarity criterion. Finally, a causal cone-complete, parameter matched local circuit achieves $2.62\times$ greater aggregate gate exposure yet loses all 30 paired comparisons in fidelity, long-range error, energy, and entropy. The evidence supports sound terminal-register certification and a resource-qualified multiscale advantage, not universal MERA superiority or a certified global optimum.

\end{abstract}

\noindent\textbf{Keywords:} quantum state tomography; MERA; tensor networks; finite measurements; certification; postselection

\section{Introduction}

Quantum learning is a basic task in quantum information, learning an unknown quantum state. Quantum tomography becomes expensive as the number of qubits grows. An arbitrary n-qubit density operator retains exponentially many real degrees of freedom, and likewise, general reconstruction protocols require exponentially many measurement outcomes and powerful classical post-processing. When the state is known to be nearly pure or low rank, low rank recovery can reduce this burden \cite{gross2010quantum}, and direct fidelity analysis and classical shadow protocols efficiently target selected properties without constructing a generative state model \cite{flammia2011direct, huang2020predicting}. These methods explain an important discrepancy: rather than reconstructing a state that can itself be queried, simulated, or prepared, evaluating a family of observables can be much easier. Additional structure is provided by many body physics. Tensor network can accurately describe a ground state of local one dimensional Hamiltonians, with the limitation of exponentially growing parameter size along with system size. Matrix product state (MPS) encodes one dimensional area law states through a finite dimensional bond \cite{perez2006matrix}. MPS structure has encouraged efficient and certifiable tomography schemes and experimental reconstructions \cite{cramer2010efficient, lanyon2017efficient}. The multiscale entanglement renormalization ansatz (MERA) augments isometries with local disentanglers and organizes them hierarchically, delivering logarithmic depth between distant sites and representation of scale dependent correlaions \cite{vidal2007entanglement, vidal2008class}. Thus, MERA is an attractive inductive bias for quantum state learning. However, it also introduces a certification question that does not emerge in a form that is different from a sequential chain. 

This paper not only works on optimization of MERA, but it also aims to connect a finite set of experimentally obtainable bit strings to a valid statement about the quality of the reconstructed state. The training loss may be a sum of local quantities, may depend on postselected states,
or may be evaluated with finite-shot error. Moreover, the state reconstructed by reversing the learned circuit includes both the contained top state and all qubits dismissed during coarse graining. A useful certificate should account for these components in one consistent probability model, expose where errors occur across scales, and remain sound when the executed terminal distribution varies slightly from the ideal one.

Existing MERA-tomography methods provide the immediate starting point. Landon-Cardinal and Poulin proposed a practical layer-by-layer learning method in which disentangled subsystems are measured, conditionally retained, and passed to the next scale \cite{landon2012practical}. Lee and Landon-Cardinal developed a variational version for critical one-dimensional systems and studied experimentally accessible truncation bounds \cite{lee2015practical}. These works show that multiscale structure can make otherwise intractable reconstruction problems operational. At the same time, a sequential presentation naturally uses normalized conditional states. If the probability of reaching a later layer is small, a modest perturbation of the unnormalized branch can become large after division by the prefix-acceptance probability. This observation does not invalidate sequential tomography, nor does deferred measurement create a measurement-complexity separation by itself. It does motivate a formulation in which the exact joint event, the conditional top state, and the statistical allowances are written separately.

We introduce such a formulation by retaining every qubit discarded by the binary MERA analysis as part of a terminal register. The analysis circuit is executed coherently from the physical layer up. Each shot produces the complete discarded bit string together with a retained top qubit, and the event that all discarded bits are zero is estimated directly. In the case of a fixed analysis circuit, the overlap of inverse circuit reconstruction and input factorizes precisely into this all zero probability, and the result is the fidelity of the conditional top state estimate. Once the registers are specified, identity is an element, which results in several practical outcomes: It holds layerwise, global, and local certificates from the same records, also it bypasses any independence hypotheses between discarded qubits; and it allows finite shot and implementation terms to be added transparently.

The terminal viewpoint should not be confused with a claim that one physical circuit is intrinsically less noisy than another. Under identical causal operations and ideal measurements, sequential measurements and terminal measurements generate the same complete bit-string distribution. We prove this equivalence explicitly. The robustness distinction lies elsewhere: normalized postselection is a nonlinear map, and its worst-case sensitivity scales inversely with the probability of the accepted prefix. In contrast, the terminal certificate is expressed as an unconditional event probability. When a separately calibrated total-variation budget bounds the deviation between ideal and implemented terminal distributions, that budget enters the fidelity bound additively. Thus, the result is a soundness statement under a stated implementation model, not an assertion that noise disappears.

Certification is not enough to demonstrate that the underlying representation is useful. We therefore experimented with ground states of the open-boundary transverse-field Ising model (TFIM), using pure states and binary bond dimension $\chi = 2$. Hyperparameters are chosen only on calibration fields and seeds; the resulting schedule is frozen before held-out fields are opened. Reconstruction quality is estimated not only through oracle fidelity, but also through held out observables, entanglement entropy, energy density error, and terminal certificate. We demonstrate the comparison of MERA with TTN, MPS, and shallow local circuits. Tensor network architecture can have a different number of parameter count and number of two qubit operations, independent coordinate matching, and simultaneous gates.

Our contributions are:
\begin{itemize}
    \item We prove the exact reconstruction identity $F = p_0F_{top}$ and derive the hierarchy $1 - p0 \leq C_{MERA} \leq \overline{C}_{MERA}$, linking one joint terminal event to layerwise and local discarded-weight diagnostics without considering independence.
    \item We convert the identical identities into coexisting finite shot, top temography, and bounded observable guarantees. All confidence distributions and union bounds are stated explicitly.
    \item We demonstrated that matched sequential and terminal schedules have the same complete ideal record allocation, while normalized prefix postselection shows inescapable inverse acceptance sensitivity.
    \item We present noise aware certificate in which a calibrated total variation implementation budget arises additively,  purely dividing statistical uncertainty from implementation mismatch.
    \item We validate the full certificate numerically and report a frozen $60$ run, 8-qubit TFIM evaluation in which every held out reconstruction surpasses fidelity $0.99$.
    \item We separate three fairness questions at $n=16$: matched aggregate gate exposure against MPS, exact-gradient multistart behavior, and a causal-cone-complete local control. At the largest matched exposure, the energy difference and fidelity between MERA-PM225 and MPS are statistically undertermined.
\end{itemize}

\section{Related Work}

Quantum tomography aims for a complete classical description of an unknown quantum state, and the cost is exponential without structural hypotheses. Compressed sensing methods influence low rank \cite{gross2010quantum}, while local reduction and scalable likelihood methods influence positivity or consistency constraints \cite{baumgratz2013scalable}. Direct fidelity estimation instead certifies overlap with a selected target using importance sampled Pauli expectations \cite{flammia2011direct}. Classical shadows assemble a reusable classical record for predicting many observables, with performance determined by the chosen measurement ensemble and observable family \cite{huang2020predicting}. These approaches focus on complementary goals: neither property prediction nor target specific fidelity estimation automatically constructs the structured inverse circuit studied here.

The tensor network tomography model focuses on using the entanglement structure to reach a solution. Cramer et al. established an efficient, certifiable architecture to approximate the state using MPS \cite{cramer2010efficient}. Lanyon et al. experimented on MPS tomography in tapped ion many body systems \cite{lanyon2017efficient}.
Neural network tomography takes you to another generative route, training model, and phases from expectation value data without imposing a detailed tensor network geometry \cite{torlai2018neural}. In case of critical one dimensional systems, Landon-Cardinal and Poulin designed sequential MERA learning \cite{landon2012practical}. Lee and Landon-Cardinal suggested a practical variational refinement with truncation error analysis \cite{lee2015practical}. Our work is most comparable to these two MERA methods, but with a different certification question: how can one joint terminal record authenticate the inverse-circuit reconstruction with finite shots and a calibrated implementation mismatch?

Table \ref{tab:related-work} outlines this positioning. The declared distinction is not that terminal measurement changes the ideal record distribution; our equivalence theorem rules out that interpretation. The contribution is the combination of an exact joint-event identity, simultaneous multiscale diagnostics, explicit finite-sample confidence, top-state accounting, and a noise-aware additive correction, followed by preregistered resourc matched architecture tests.

\begin{table}[t]
\centering
\caption{Positioning relative to representative state-learning and certification methods. Full model means that the output specifies a generative state or circuit rather than only selected properties.}
\label{tab:related-work}
\scriptsize
\setlength{\tabcolsep}{3.2pt}
\begin{tabularx}{\linewidth}{@{}>{\raggedright\arraybackslash}p{0.18\linewidth}>{\raggedright\arraybackslash}p{0.15\linewidth}>{\raggedright\arraybackslash}p{0.14\linewidth}>{\raggedright\arraybackslash}p{0.18\linewidth}Y@{}}
\toprule
Method & Structural prior & Full model & Certificate or guarantee & Distinction from this work \\
\midrule
Compressed sensing~\cite{gross2010quantum} & Low rank & Yes & Recovery from suitable random measurements & No multiscale inverse circuit or terminal register \\
Direct fidelity estimation~\cite{flammia2011direct} & Known target & No & Target-overlap estimate & Certifies a supplied target rather than learning MERA \\
Classical shadows~\cite{huang2020predicting} & Observable-dependent & No & Many-property prediction & Does not by itself reconstruct the inverse circuit \\
MPS tomography~\cite{cramer2010efficient, lanyon2017efficient} & 1D low entanglement & Yes & Parent-Hamiltonian/local consistency bounds & Chain geometry rather than multiscale discarded registers \\
Neural tomography~\cite{torlai2018neural} & Neural ansatz & Yes & Likelihood-based reconstruction & No terminal discarded-weight certificate \\
Sequential MERA~\cite{landon2012practical,lee2015practical} & Multiscale entanglement & Yes & Layerwise truncation control & Uses sequential conditional states and proxies \\
Variational/disentangling
circuits\cite{yao2023quantum, liu2020variational}&Circuit ansatz&Yes&Objective-dependent reconstruction&No multiscale terminal-register theorem\\

This work & Binary MERA/inverse circuit & Yes & Exact joint identity; finite-shot and noise-aware bounds & One complete terminal record with global, layer, and local diagnostics \\
\bottomrule
\end{tabularx}
\end{table}

\section{Problem Setting}

\subsection{Target Family}

We work with normalized pure states on an open chain of $n$ qubits. Our main targets are ground states of the open-boundary transverse-field Ising Hamiltonian
\begin{equation}
H(J,h)=-J\sum_{i=1}^{n-1}Z_iZ_{i+1}-h\sum_{i=1}^{n}X_i .
\end{equation}

The critical region sits near $h/J=1$. For our present small and intermediate size simulations, exact diagonalization delivers us the target states and oracle observables. Instead of direct access to target amplitudes, training of the learner relies on measurement estimates, except in specifically labeled noiseless diagnostics. Throughout, 
\begin{equation}
F(\lvert a\rangle,\lvert b\rangle)=|\langle a\vert b\rangle|^2
\end{equation}
denotes squared fidelity.

\subsection{Binary MERA analysis circuit}

We use an open-boundary binary MERA with bond dimension $\chi=2$ as our main architecture. Each layer starts with two-qubit disentanglers across block boundaries, then applies two-qubit unitary dilations of the isometries within each block. One qubit per isometry block is thrown away, and later layers never touch wires that were already discarded. For $n=8$, the analysis goes $8\to4\to2\to1$, so we end up with seven discarded qubits and one top qubit.

A disentangler is an $SU(4)$ element with 15 real parameter values. An isometry in MERA is $w:\mathbb C^2\to\mathbb C^2\otimes\mathbb C^2$, which has 12 effective degrees of freedom. The primary model of MERA contains independent tensors, and we tried shared tensors as an alternative. MPS baseline for MERA is a sequential nearest-neighbor inverse circuit with the same local $SU(4)$ block.

Let $D=\bigsqcup_{\ell=0}^{L-1}D_\ell$ be the disjoint`  union of discarded register and $T$ the retained top register. Given an input $\lvert\psi\rangle$ and analysis unitary $U_{\theta}$, we write the output as

\begin{equation}
\lvert\Psi_\theta\rangle=U_\theta\lvert\psi\rangle\in\mathcal H_T\otimes\mathcal H_D .
\label{eq:3}
\end{equation}

In TTN architecture, there is no disentangler. To compute the MPS comparator, we use a sequential nearest-neighbor inverse circuit. At $n=16$, MPS and TTN have 15 blocks each and 225 coordinates, while full MERA has physical $SU(4)$ blocks with 390 independent coordinates. Our parameter matched MERA shares parameters across selected tensors along with keeping all 225 coordinates, so it ends up with 15 independent $SU(4)$ groups and 225 coordinates. Finally, Local-225 and Local-390 serve as shallow nearest-neighbor brickwork controls, containing 15 and 26 independent blocks, respectively.

\subsection{Reconstruction}

Define the all-zero projector
\begin{equation}
\Pi=I_T\otimes\ketbra{0^D}{0^D},
\qquad
p_0=\langle\Psi_\theta\rvert\Pi\lvert\Psi_\theta\rangle.
\label{eq:p0}
\end{equation}
For $p_0>0$, the normalized conditional top state is
\begin{equation}
\lvert\phi_T\rangle=
\frac{(I_T\otimes\langle0^D\rvert)\lvert\Psi_\theta\rangle}{\sqrt{p_0}}.
\label{eq:conditional-top}
\end{equation}
Given a normalized top estimate $\lvert\widehat\phi_T\rangle$, inverse-circuit reconstruction returns
\begin{equation}
\lvert\widehat\psi\rangle=
U_\theta^\dagger
(\lvert\widehat\phi_T\rangle_T\otimes\lvert0^D\rangle_D).
\label{eq:reconstruction}
\end{equation}

\section{Terminal-Register Certificates}

\subsection{Exact certificate hierarchy}

For discarded layer $D_\ell$, let
\begin{equation}
\Pi_\ell=I_{\overline{D_\ell}}\otimes
\ketbra{0^{D_\ell}}{0^{D_\ell}},
\qquad
q_\ell=\langle\Psi_\theta\rvert(I-\Pi_\ell)\lvert\Psi_\theta\rangle.
\end{equation}
For its $j$th bit, define the unconditional discarded weight
$\eta_{\ell,j}=\Pr[D_{\ell,j}=1]$. The layer and local certificates are
\begin{equation}
\cM=\sum_{\ell=0}^{L-1}q_\ell,
\qquad
\cU=\sum_{\ell=0}^{L-1}\sum_{j\in D_\ell}\eta_{\ell,j}.
\label{eq:certificates}
\end{equation}

\begin{theorem}[Exact factorization and multiscale certificate]
\label{thm:exact}
For every pure input, unitary analysis circuit, and normalized pure top estimate,
\begin{equation}
F(\lvert\psi\rangle,\lvert\widehat\psi\rangle)
=p_0\Ftop,
\qquad
\Ftop=|\langle\phi_T\vert\widehat\phi_T\rangle|^2.
\label{eq:factorization}
\end{equation}
Moreover,
\begin{equation}
1-p_0\leq\cM\leq\cU,
\label{eq:hierarchy}
\end{equation}
and
\begin{equation}
1-F\leq
\min\{1,\cM+1-\Ftop\}
\leq
\min\{1,\cU+1-\Ftop\}.
\label{eq:main-bound}
\end{equation}
\end{theorem}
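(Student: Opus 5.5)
The factorization follows from unitarity of $U_\theta$; the hierarchy is a union bound over projectors; and \eqref{eq:main-bound} combines the two.

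\textbf{Factorization.} Because $U_\theta$ is unitary,
\[
\langle\psi\vert\widehat\psi\rangle=\langle\Psi_\theta\rvert\bigl(\lvert\widehat\phi_T\rangle\otimes\lvert0^D\rangle\bigr),
\]
using $\lvert\Psi_\theta\rangle=U_\theta\lvert\psi\rangle$ from \eqref{eq:3} and the reconstruction \eqref{eq:reconstruction}. Taking the partial inner product with $\lvert 0^D\rangle$ first gives
\[
\langle\psi\vert\widehat\psi\rangle=\bigl((I_T\otimes\langle0^D\rvert)\lvert\Psi_\theta\rangle\bigr)^\dagger\lvert\widehat\phi_T\rangle .
\]
If $p_0>0$, definition \eqref{eq:conditional-top} rewrites this as $\sqrt{p_0}\,\langle\phi_T\vert\widehat\phi_T\rangle$. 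Squaring the modulus yields $F=p_0\Ftop$. If $p_0=0$, the unnormalized branch vanishes, so $F=0$. In that case $\Ftop$ is undefined, but any convention makes the identity and all the bounds below hold trivially, since $1-F=1$.

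\textbf{Hierarchy \eqref{eq:hierarchy}.} The projectors $\Pi_\ell$ act on disjoint tensor factors, so they commute, and $\Pi=\prod_\ell\Pi_\ell$. For commuting projectors, the complement of a product satisfies the operator union bound
\[
I-\prod_\ell\Pi_\ell\le\sum_\ell(I-\Pi_\ell).
\]
I would prove this by induction using $I-AB=(I-A)+A(I-B)$ together with $A(I-B)\le I-B$, which holds for commuting projectors $A,B$. Taking the expectation in $\lvert\Psi_\theta\rangle$ gives $1-p_0\le\cM$. Equivalently, measure every discarded bit in the computational basis; this yields a classical joint distribution, and $1-p_0\le\cM$ is the ordinary union bound over the events "layer $\ell$ is not all zero". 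The same classical picture gives the second inequality: within a layer,
\[
q_\ell=\Pr\bigl[\exists j: D_{\ell,j}=1\bigr]\le\sum_j\eta_{\ell,j},
\]
so $\cM\le\cU$. No independence between the discarded bits is needed at any point.

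\textbf{Main bound \eqref{eq:main-bound}.} Start from the factorization:
\[
1-F=1-p_0\Ftop=(1-p_0)\Ftop+(1-\Ftop)\le(1-p_0)+(1-\Ftop),
\]
where the last step uses $\Ftop\le1$. Applying \eqref{eq:hierarchy} to the first term gives $1-F\le\cM+1-\Ftop\le\cU+1-\Ftop$. Since also $1-F\le1$, we can take the minimum with $1$ throughout. Monotonicity of $x\mapsto\min\{1,x\}$ then carries the second inequality over.

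The only delicate point is the operator union bound for commuting projectors. It is cleanest to avoid operator algebra altogether by diagonalizing in the computational basis of $D$, where everything reduces to classical probability. The $p_0=0$ convention needs only a remark.
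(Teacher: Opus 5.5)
Your proposal is correct and follows essentially the same route as the paper: unitary invariance reduces $\langle\psi\vert\widehat\psi\rangle$ to $\sqrt{p_0}\langle\phi_T\vert\widehat\phi_T\rangle$, the hierarchy is the classical union bound over layer and bit events of the computational-basis distribution on $D$, and the main bound splits $1-p_0\Ftop$. You write $(1-p_0)\Ftop+(1-\Ftop)$ where the paper writes $(1-p_0)+p_0(1-\Ftop)$, which is immaterial. Your extra $p_0=0$ remark, a case the paper silently excludes, holds only if the convention keeps $\Ftop\le1$; the bounds then follow from $\cM\ge 1-p_0=1$, not from $1-F=1$ alone.
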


\begin{proof}
The analyzed state has the orthogonal decomposition
\begin{equation}
\lvert\Psi_\theta\rangle=
\sqrt{p_0}\lvert\phi_T\rangle\lvert0^D\rangle+
\sqrt{1-p_0}\lvert\xi_\perp\rangle,
\qquad \Pi\lvert\xi_\perp\rangle=0.
\end{equation}
Using Eqs.~\eqref{eq:reconstruction} and \eqref{eq:3}, unitary invariance gives
\begin{equation}
|\langle\widehat\psi\vert\psi\rangle|^2
=|\langle\widehat\phi_T,0^D\vert\Psi_\theta\rangle|^2
=p_0|\langle\widehat\phi_T\vert\phi_T\rangle|^2.
\end{equation}
Let $A_\ell$ be the event that $D_\ell$ is not all zero. The global failure event is $A=\bigcup_\ell A_\ell$, so
$1-p_0\leq\sum_\ell\Pr(A_\ell)=\cM$. Within a layer,
$A_\ell=\bigcup_{j\in D_\ell}\{D_{\ell,j}=1\}$, hence
$q_\ell\leq\sum_j\eta_{\ell,j}$ and $\cM\leq\cU$. Finally,
\begin{equation}
1-p_0\Ftop=(1-p_0)+p_0(1-\Ftop)
\leq(1-p_0)+(1-\Ftop),
\label{eq:14}
\end{equation}
which proves Eq.~\eqref{eq:main-bound}.
\end{proof}

The $q_\ell$ are unconditional marginals of one terminal distribution. In general,
$p_0\neq\prod_\ell(1-q_\ell)$; no independence assumption is used.

\begin{corollary}[Observable error]
For any Hermitian observable $O$,
\begin{equation}
\big|\Tr[O(\psi-\widehat\psi)]\big|
\leq2\|O\|_\infty
\sqrt{\min\{1,\cM+1-\Ftop\}},
\label{eq:observable-bound}
\end{equation}
and the same statement holds with $\cU$ in place of $\cM$.
\end{corollary}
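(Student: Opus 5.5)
The plan is to reduce the observable bound to the standard relation between trace distance and fidelity for pure states, and then substitute the bound on $1-F$ from Theorem~\ref{thm:exact}. Write $\psi=\ketbra{\psi}{\psi}$ and $\widehat\psi=\ketbra{\widehat\psi}{\widehat\psi}$. By Hölder's inequality for Schatten norms,
\begin{equation*}
\big|\Tr[O(\psi-\widehat\psi)]\big|\leq\|O\|_\infty\,\|\psi-\widehat\psi\|_1 .
\end{equation*}
This holds because $\psi-\widehat\psi$ is Hermitian and traceless. Diagonalizing it shows that $|\Tr[OX]|\leq\|O\|_\infty\|X\|_1$.

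Next, the trace norm of a difference of two pure-state projectors is computed exactly. The operator $\psi-\widehat\psi$ is supported on the span of $\lvert\psi\rangle$ and $\lvert\widehat\psi\rangle$, which has dimension at most two. It is traceless, so its eigenvalues are $\pm\lambda$. Then
\begin{equation*}
\Tr[(\psi-\widehat\psi)^2]=2-2|\langle\psi\vert\widehat\psi\rangle|^2=2(1-F)
\end{equation*}
gives $\lambda=\sqrt{1-F}$. Hence $\|\psi-\widehat\psi\|_1=2\sqrt{1-F}$, and
\begin{equation*}
\big|\Tr[O(\psi-\widehat\psi)]\big|\leq2\|O\|_\infty\sqrt{1-F}.
\end{equation*}

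Finally, I would invoke Eq.~\eqref{eq:main-bound}, which gives $1-F\leq\min\{1,\cM+1-\Ftop\}\leq\min\{1,\cU+1-\Ftop\}$. Since the square root is monotone, both forms of Eq.~\eqref{eq:observable-bound} follow. The cap at $1$ is harmless because $1-F\leq1$ always holds.

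I do not expect a real obstacle here. The only point that needs care is the exact pure-state identity $\|\psi-\widehat\psi\|_1=2\sqrt{1-F}$, including the degenerate case $\lvert\widehat\psi\rangle\propto\lvert\psi\rangle$, where both sides vanish. I would handle that step by the rank-two eigenvalue argument above rather than by citing the Fuchs--van de Graaf inequalities, because for pure states this identity is tight and self-contained. For Hölder's inequality, I would note that $\|O\|_\infty$ is the largest absolute eigenvalue of $O$, so no finite-dimensionality beyond the $n$-qubit setting is required.
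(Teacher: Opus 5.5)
Your proposal is correct and follows the paper's own proof: Hölder's inequality $|\Tr[OX]|\leq\|O\|_\infty\|X\|_1$, the pure-state identity $\tfrac12\|\psi-\widehat\psi\|_1=\sqrt{1-F}$, and monotonicity of the square root applied to Eq.~\eqref{eq:main-bound}. The only difference is that you derive the pure-state trace-distance identity explicitly via the rank-two traceless eigenvalue argument, which the paper simply asserts. (Tracelessness is not actually needed for the Hölder step itself.)
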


\begin{proof}
For pure states, $\D(\psi,\widehat\psi)=\sqrt{1-F}$. Theorem~\ref{thm:exact} bounds $1-F$, and Holder's inequality gives
$|\Tr[O(\rho-\sigma)]|\leq\|O\|_\infty\|\rho-\sigma\|_1
=2\|O\|_\infty\D(\rho,\sigma)$.
\end{proof}

\subsection{Finite-measurement confidence}

Suppose each of the $N$ independent terminal executions returns the full discarded bit string. Consider a covering family of $K$ Bernoulli failure indicators, with true means $\mu_k$ and empirical means $\widehat\mu_k$. In the global family we take $K=1$ and mean $1-p_0$; the layer family has $K=L$ and sum $\cM$; the local-bit family has $K=|D|$ and sum $\cU$.

\begin{theorem}[Simultaneous one-sided finite-shot certificate]
\label{thm:finite}
For $\delta_D\in(0,1)$, set
\begin{equation}
b_K(N,\delta_D)=
\min\left\{1,
\sum_{k=1}^{K}\widehat\mu_k+
K\sqrt{\frac{\log(K/\delta_D)}{2N}}
\right\}.
\label{eq:finite-family}
\end{equation}
Then, with probability at least $1-\delta_D$,
\begin{equation}
1-p_0\leq b_K(N,\delta_D).
\end{equation}
\end{theorem}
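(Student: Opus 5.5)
The plan is to combine one-sided Hoeffding bounds for each indicator, a union bound over the $K$ indicators, and the covering property that Theorem~\ref{thm:exact} supplies. I first make precise what ``covering family'' means: in each case $1-p_0\le\sum_{k=1}^K\mu_k$. For the global family ($K=1$) this holds with equality. For the layer family, $\sum_k\mu_k=\cM\ge 1-p_0$ by Eq.~\eqref{eq:hierarchy}. For the local family, $\sum_k\mu_k=\cU\ge\cM\ge1-p_0$, again by Eq.~\eqref{eq:hierarchy}. This deterministic inequality is the only place where quantum structure enters. The rest is classical concentration applied to the $N$ i.i.d.\ terminal bit strings. Each indicator $X_k^{(i)}\in\{0,1\}$ is a function of the $i$th string, so for fixed $k$ the values $X_k^{(1)},\dots,X_k^{(N)}$ are i.i.d.\ with mean $\mu_k$.

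Next I will apply the one-sided Hoeffding inequality to each $k$ separately. For $t>0$ it gives $\Pr[\mu_k-\widehat\mu_k\ge t]\le e^{-2Nt^2}$. I will choose $t=\sqrt{\log(K/\delta_D)/(2N)}$, which makes each failure probability at most $\delta_D/K$. A union bound over the $K$ indicators then shows that, with probability at least $1-\delta_D$, $\mu_k<\widehat\mu_k+t$ holds simultaneously for all $k$. The indicators are correlated across $k$, because they come from the same strings. This causes no difficulty: the union bound needs no independence across $k$, only independence across shots for each fixed $k$.

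On that good event I sum over $k$ to obtain $1-p_0\le\sum_k\mu_k\le\sum_k\widehat\mu_k+Kt$. Since $1-p_0\le1$ holds trivially, I may also take the minimum with $1$, which gives $1-p_0\le b_K(N,\delta_D)$.

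I do not expect a genuine obstacle. The step needing the most care is the bookkeeping: I must state clearly that the covering inequality $1-p_0\le\sum_k\mu_k$ comes from Theorem~\ref{thm:exact}, and that the confidence level comes from the union bound with level $\delta_D/K$ per indicator. One might hope to improve the slack term $Kt$ by applying a single Hoeffding bound to the sum $\sum_k X_k^{(i)}\in[0,K]$, which gives the slack $K\sqrt{\log(1/\delta_D)/(2N)}$. That version is actually slightly tighter. However, the statement as written only asks for the weaker per-coordinate form, so the union-bound argument suffices.
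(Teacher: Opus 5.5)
Your proof is correct and follows the paper's argument step for step. It applies one-sided Hoeffding to each indicator with $t=\sqrt{\log(K/\delta_D)/(2N)}$, takes a union bound over the $K$ indicators (no independence across $k$ is needed), invokes the covering inequality $1-p_0\le\sum_k\mu_k$ from Theorem~\ref{thm:exact}, sums, and truncates at $1$. Your side remark is also right: applying Hoeffding directly to the per-shot count $\sum_k X_k^{(i)}\in[0,K]$ gives the slack $K\sqrt{\log(1/\delta_D)/(2N)}$. That slack is never larger than the stated one and is strictly smaller for $K>1$, though the paper does not use it.
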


\begin{proof}
Apply one-sided Hoeffding to each Bernoulli indicator,
$\Pr[\mu_k>\widehat\mu_k+t]\leq e^{-2Nt^2}$~\cite{hoeffding1963probability}. Take
$t=\sqrt{\log(K/\delta_D)/(2N)}$ and union bound over the $K$ indicators. Then with probability at least $1-\delta_D$, every $\mu_k\leq\widehat\mu_k+t$. Since each family covers the global failure event, we get
$1-p_0\leq\sum_k\mu_k\leq\sum_k\widehat\mu_k+Kt$.
\end{proof}

For the direct global estimator, $\widehat\mu_1=1-\widehat p_0$ and the inflation term is
$\sqrt{\log(1/\delta_D)/(2N)}$. The layer and local bounds give more diagnostic detail, but they are statistically looser.

When a single top qubit is retained, accepted shots are measured in the $X$, $Y$, and $Z$ bases. Let $K_b$ be the accepted count in basis $b$, set $K_{\min}=\min_bK_b$, and let the returned pure state point along the empirical Bloch vector.

\begin{lemma}[Finite-shot top term]
\label{lem:top}
Conditioned on positive accepted counts, with probability at least $1-\delta_T$,
\begin{equation}
1-\Ftop\leq
\varepsilon_{\mathrm{top}}:=
\min\left\{1,
\frac{6\log(6/\delta_T)}{K_{\min}}
\right\}.
\label{eq:top-bound}
\end{equation}
\end{lemma}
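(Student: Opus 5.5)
We work conditionally on the accepted counts $K_X,K_Y,K_Z\ge 1$. Conditioned on acceptance, each accepted shot in basis $b$ is an independent $\pm1$ outcome whose mean is the Bloch component $r_b$ of the conditional top state $\phi_T$ defined in Eq.~\eqref{eq:conditional-top}. We write $\phi_T=\tfrac12(I+\vec r\cdot\vec\sigma)$ with $|\vec r|=1$, and we let $\widehat r_b$ be the empirical means. The returned state is the pure state $\widehat\phi_T=\tfrac12(I+\widehat n\cdot\vec\sigma)$ with $\widehat n=\widehat{\vec r}/|\widehat{\vec r}|$. For two pure qubit states the fidelity is explicit,
\begin{equation*}
\Ftop=\tfrac12(1+\vec r\cdot\widehat n),\qquad 1-\Ftop=\tfrac14|\vec r-\widehat n|^2 ,
\end{equation*}
using $|\vec r|=|\widehat n|=1$. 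The proof therefore reduces to a concentration bound on the empirical Bloch vector together with control of the normalization step.

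First we apply two-sided Hoeffding to each basis separately. For $\pm1$ variables it gives $\Pr[|\widehat r_b-r_b|>t_b]\le 2e^{-K_bt_b^2/2}$. We take $t_b^2=2\log(6/\delta_T)/K_b$. A union bound over the three bases and the two tails (six events, hence the $6/\delta_T$) then gives, with probability at least $1-\delta_T$,
\begin{equation*}
\|\widehat{\vec r}-\vec r\|_2^2\le\sum_b t_b^2\le \frac{6\log(6/\delta_T)}{K_{\min}} .
\end{equation*}
Since the bases use disjoint accepted shots, no independence across bases is needed; the union bound suffices.

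Second, we pass from $\widehat{\vec r}$ to its normalization $\widehat n$. We take the geometric fact that for a unit vector $\vec r$, the radial projection of any nonzero $\vec v$ onto the sphere satisfies $|\vec r-\vec v/|\vec v||\le 2|\vec r-\vec v|$. Combining this with the fidelity formula gives
\begin{equation*}
1-\Ftop=\tfrac14|\vec r-\widehat n|^2\le |\vec r-\widehat{\vec r}|^2\le \frac{6\log(6/\delta_T)}{K_{\min}} .
\end{equation*}
The cap at $1$ in Eq.~\eqref{eq:top-bound} is trivial because $\Ftop\ge 0$. If $\widehat{\vec r}=0$, any returned direction is allowed, and then $|\vec r-\widehat{\vec r}|=1$ already forces the right-hand side to be at least $1\cdot$ (since $|\vec r|^2=1\le\varepsilon_{\mathrm{top}}$ on the good event), so the bound still holds.

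The main obstacle is the constant bookkeeping in the normalization step. The factor-$2$ projection bound is what cancels the $\tfrac14$ in the fidelity formula. If that inequality is too crude, we would instead use the sharper estimate $1-\vec r\cdot\widehat n\le|\vec r-\widehat{\vec r}|^2/\ldots$, which follows because the unit vector $\widehat n$ is the point on the sphere closest to $\widehat{\vec r}$; this yields $|\vec r-\widehat n|\le 2|\vec r-\widehat{\vec r}|$ by the triangle inequality. A second point needing care is that conditioning on acceptance leaves the accepted shots i.i.d. with mean exactly the Bloch vector of $\phi_T$, with the counts $K_b$ fixed. This holds because each shot's acceptance and its top-qubit outcome are jointly i.i.d. across shots, so conditioning on the acceptance pattern leaves the accepted outcomes i.i.d.
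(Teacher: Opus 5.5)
Your proof is correct and follows essentially the same route as the paper: two-sided Hoeffding for the $\pm1$ outcomes in each basis, a union bound over the three bases, the radial-projection estimate $\|s-r\|_2\le 2\|\widetilde r-r\|_2$, and the pure-qubit identity $1-\Ftop=\|s-r\|_2^2/4$. Where you differ, the differences are harmless refinements: you use basis-dependent radii $t_b$ instead of the paper's common radius set by $K_{\min}$, and you treat the degenerate case $\widehat{\vec r}=0$ explicitly, which the paper leaves implicit.
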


\begin{proof}
Since the outcomes lie in $[-1,1]$, two-sided Hoeffding gives
$\Pr(|\widetilde r_b-r_b|>a)\leq2e^{-K_ba^2/2}$. With
$a=\sqrt{2\log(6/\delta_T)/K_{\min}}$, a union bound over the three bases yields
$\|\widetilde r-r\|_2\leq\sqrt3a$. If $s=\widetilde r/\|\widetilde r\|_2$ is the returned unit vector, then
$\|s-r\|_2\leq2\|\widetilde r-r\|_2$. For pure qubits,
$1-\Ftop=\|s-r\|_2^2/4\leq3a^2$. Whenever the right-hand side exceeds one, we just fall back on the trivial infidelity bound.
\end{proof}

Putting Theorem~\ref{thm:finite} and Lemma~\ref{lem:top} together with a union bound gives
\begin{equation}
1-F\leq
\min\{1,b_K(N,\delta_D)+\varepsilon_{\mathrm{top}}\}
\label{eq:fully-finite}
\end{equation}
with probability at least $1-\delta_D-\delta_T$.

\section{Sequential Measurement, Postselection, and Noise}

\subsection{Exact equivalence of ideal schedules}

Order discarded registers as $D_1,\ldots,D_M$. Let $V_i$ denote the unitary at stage $i$ and let $P_i^{b_i}$ project $D_i$ onto outcome $b_i$. We impose the \emph{causal-discard condition}
\begin{equation}
[V_j,P_i^{b_i}]=0\qquad\text{for all }j>i,
\label{eq:causal-discard}
\end{equation}
and exclude outcome-dependent feed-forward.

\begin{theorem}[Deferred discarded-register measurement]
\label{thm:equivalence}
Under Eq.~\eqref{eq:causal-discard}, sequential measurement and joint terminal measurement generate the same complete-record distribution:
\begin{equation}
P_{\mathrm{seq}}(\bm b)=P_{\mathrm{term}}(\bm b).
\label{eq:distribution-equivalence}
\end{equation}
Consequently,
\begin{equation}
p_0=\prod_{i=1}^{M}Pr(B_i=0\mid B_{<i}=0).
\label{eq:conditional-product}
\end{equation}
\end{theorem}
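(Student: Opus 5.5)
We will write both distributions as explicit Born-rule expressions and then use the causal-discard condition to reorder projectors and unitaries until the two expressions coincide. Take the input state $\rho_0=\ketbra{\psi}{\psi}$, tensored with any ancilla initialization; all the arguments below are linear, so a pure input suffices. Ideal projective measurement of $D_i$ right after stage $i$, with no feed-forward, produces the unnormalized branch
\begin{equation*}
\lvert\Psi_{\mathrm{seq}}(\bm b)\rangle=P_M^{b_M}V_M\cdots P_2^{b_2}V_2P_1^{b_1}V_1\lvert\psi\rangle .
\end{equation*}
The sequential record then has probability $P_{\mathrm{seq}}(\bm b)=\|\lvert\Psi_{\mathrm{seq}}(\bm b)\rangle\|^2$. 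This holds by iterating Born's rule: the probability of $b_i$ given $b_{<i}$ is the squared norm ratio of successive unnormalized branches, and the ratios telescope. In the terminal schedule the same unitaries run first, so $P_{\mathrm{term}}(\bm b)=\|P_M^{b_M}\cdots P_1^{b_1}V_M\cdots V_1\lvert\psi\rangle\|^2$. The projectors act on disjoint registers and therefore commute, so their order in the terminal product does not matter.

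The key step is an induction that moves each projector $P_i^{b_i}$ leftward past every $V_j$ with $j>i$, using Eq.~\eqref{eq:causal-discard}. Start from the sequential product. The projector $P_1^{b_1}$ sits immediately to the left of $V_1$ and has only $V_2,\dots,V_M$ and later projectors on its left. Commuting it through each $V_j$, and through the later projectors $P_k^{b_k}$ with $k>1$ (disjoint registers), carries it to the far left. Next do the same for $P_2^{b_2}$, and so on. After $M$ steps the operator is $P_1^{b_1}\cdots P_M^{b_M}V_M\cdots V_1$, which is exactly the terminal operator. The branch vectors are therefore equal for every $\bm b$, and so are their norms, which proves Eq.~\eqref{eq:distribution-equivalence}.

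The conditional product formula Eq.~\eqref{eq:conditional-product} follows from the chain rule applied to $P_{\mathrm{seq}}(\bm 0)$. Taking $\bm b=\bm 0$ gives $P_{\mathrm{term}}(\bm 0)=\langle\Psi_\theta\rvert\Pi\lvert\Psi_\theta\rangle=p_0$, because $\prod_iP_i^{0}=\Pi$ in the notation of Eq.~\eqref{eq:p0}. On the sequential side, the chain rule writes $P_{\mathrm{seq}}(\bm 0)=\prod_i\Pr(B_i=0\mid B_{<i}=0)$. The conditionals are well defined whenever the prefix probabilities are positive. If some prefix probability vanishes, both sides equal zero once we adopt the usual convention for conditioning on a null event.

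The main obstacle is making the commutation hypothesis precise enough to carry the whole argument. The condition must be read as $V_j$ acting trivially on $D_i$ for $j>i$ (the ``later layers never touch discarded wires'' property), so that it holds for every outcome $b_i$. It must also hold with the measurement modeled as a projector rather than a destructive readout. A destructive readout can be handled by dilation, since tracing out a register that is never touched again does not change the marginal statistics. Feed-forward would break the argument, because then $V_j$ would depend on $b_i$ and the branch operators would no longer be a fixed product; this is exactly why the theorem excludes it.
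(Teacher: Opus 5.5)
Your proof is correct and matches the paper's: you write the sequential Kraus operator $P_M^{b_M}V_M\cdots P_1^{b_1}V_1$, commute each earlier projector leftward through the later unitaries via Eq.~\eqref{eq:causal-discard} and past the other projectors by register disjointness to reach the terminal operator, and then get Eq.~\eqref{eq:conditional-product} from the chain rule at $\bm b=\bm 0$; your null-prefix remark is a small extra that the paper omits. One minor correction: the argument only needs $[V_j,P_i^{b_i}]=0$ for every outcome $b_i$, which is weaker than requiring $V_j$ to act trivially on $D_i$ (for example, $V_j$ may use $D_i$ as a control), so your stronger reading is sufficient but not necessary.
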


\begin{proof}
For record $\bm b$, the sequential Kraus operator is
\begin{equation}
K_{\bm b}^{\mathrm{seq}}=
P_M^{b_M}V_M\cdots P_2^{b_2}V_2P_1^{b_1}V_1.
\end{equation}
Earlier projectors commute through later unitaries by Eq.~\eqref{eq:causal-discard}; projectors on distinct discarded registers commute. Therefore
\begin{equation}
K_{\bm b}^{\mathrm{seq}}=
\left(\prod_{i=1}^{M}P_i^{b_i}\right)V_M\cdots V_1
=K_{\bm b}^{\mathrm{term}}.
\end{equation}
The branch probabilities $\Tr(K_{\bm b}\rho K_{\bm b}^\dagger)$ are identical. Equation~\eqref{eq:conditional-product} is the classical chain rule for the all-zero event.
\end{proof}

From $N$ complete records, let $N_i$ count records whose first $i$ bits are zero and set $N_0=N$. Whenever $N_{i-1}>0$,
\begin{equation}
\prod_{i=1}^{M}\frac{N_i}{N_{i-1}}=\frac{N_M}{N}=\widehat p_0.
\end{equation}
Thus identical complete records give identical sequential-product and terminal estimators. We claim no universal confidence or shot-complexity advantage for estimating the same $p_0$ from the same data.

\subsection{Instability of normalized postselection}

For density operators $\rho,\sigma$, use trace distance
$\D(\rho,\sigma)=\tfrac12\|\rho-\sigma\|_1$. For a common projector $P$, define
\begin{equation}
p=\Tr(P\rho),\quad q=\Tr(P\sigma),\quad
\rho_P=\frac{P\rho P}{p},\quad
\sigma_P=\frac{P\sigma P}{q},
\end{equation}
when $p,q>0$.

\begin{theorem}[Inverse-acceptance postselection instability]
\label{thm:postselection}
If $\D(\rho,\sigma)\leq\tau$ and $p,q>0$, then
\begin{equation}
\D(\rho_P,\sigma_P)
\leq
\min\left\{1,
\frac{\tau+\tfrac12|p-q|}{\max\{p,q\}}
\right\}
\leq
\min\left\{1,
\frac{3\tau}{2\max\{p,q\}}
\right\}.
\label{eq:postselection-bound}
\end{equation}
The inverse-acceptance dependence is necessary in general.
\end{theorem}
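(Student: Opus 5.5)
The plan is to prove the first inequality by a direct algebraic decomposition of the difference of the normalized postselected states, then deduce the second from a contraction bound on $|p-q|$, and finally exhibit a family showing the $1/\max\{p,q\}$ scaling cannot be removed. Assume without loss of generality $p\geq q$, so $\max\{p,q\}=p$; the statement is symmetric under swapping $\rho$ and $\sigma$. The trace distance between density operators never exceeds one, so the outer $\min\{1,\cdot\}$ is automatic and only the second argument needs proof.

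Write
\begin{equation}
\rho_P-\sigma_P=\frac{P\rho P-P\sigma P}{p}+P\sigma P\left(\frac1p-\frac1q\right).
\end{equation}
For the first term, pinching by $P$ is trace-norm contractive, since $\|PXP\|_1\leq\|P\|_\infty\|X\|_1\|P\|_\infty$. Hence $\|P(\rho-\sigma)P\|_1\leq\|\rho-\sigma\|_1\leq 2\tau$. For the second term, $\|P\sigma P\|_1=q$, so its trace norm is $q\,|1/p-1/q|=|p-q|/p$. The triangle inequality then gives $\|\rho_P-\sigma_P\|_1\leq(2\tau+|p-q|)/p$. Halving this yields $\D(\rho_P,\sigma_P)\leq(\tau+\tfrac12|p-q|)/p$, which is the first bound; putting $q$ rather than $p$ in the first denominator is exactly what makes the $\max$ appear. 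For the second inequality, $|p-q|=|\Tr[P(\rho-\sigma)]|\leq\D(\rho,\sigma)\leq\tau$, because $0\leq P\leq I$ and trace distance is the maximum over such effects. Therefore $\tau+\tfrac12|p-q|\leq\tfrac32\tau$.

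For necessity, the goal is states with $\D(\rho,\sigma)\leq\tau$, acceptance of order $p$, and $\D(\rho_P,\sigma_P)$ of order $\tau/p$. On a qubit with $P=\ketbra{0}{0}$, I will take
\begin{equation}
\lvert a\rangle=\sqrt p\,\lvert0\rangle+\sqrt{1-p}\,\lvert1\rangle .
\end{equation}
The second state is the same except that the small $\lvert0\rangle$ amplitude is rotated in phase, or the branch is mixed with an orthogonal ancilla component, so that the unnormalized accepted branch changes by order $\tau$ in trace norm while the acceptance stays $p$. A cleaner construction uses two qubits: an accepted branch $\sqrt p\,\lvert0\rangle\lvert u\rangle$ against $\sqrt p\,\lvert0\rangle\lvert v\rangle$, with a common rejected branch $\sqrt{1-p}\,\lvert1\rangle\lvert w\rangle$. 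Then $\D(\rho,\sigma)=\sqrt{1-p\,|\langle u\vert v\rangle|^2-(1-p)}\approx\sqrt{p}\,\D(u,v)$, while the postselected distance is $\D(u,v)$, giving ratio $1/\sqrt p$. To get the full $1/p$ one should instead use mixed states. Take $\rho=p\,\ketbra{0}{0}\otimes u+(1-p)\ketbra11\otimes w$, and let $\sigma$ replace $u$ by $v$ in the accepted branch. Then $\D(\rho,\sigma)=p\,\D(u,v)=\tau$, while $\D(\rho_P,\sigma_P)=\D(u,v)=\tau/p$. This matches the bound up to the constant $3/2$ whenever $\tau\leq p$.

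The main obstacle is the necessity part, because pure-state examples naively give only $1/\sqrt p$. I will therefore present the mixed block-diagonal example, which is allowed since the theorem is stated for density operators, and note that it saturates the first bound exactly when $p=q$.
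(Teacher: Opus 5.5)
Your proof is correct and follows the paper's argument step for step. You use the same splitting of $\rho_P-\sigma_P$ via contractivity of projector compression and $\|P\sigma P\|_1=q$, and the same effect bound $|p-q|=|\Tr[P(\rho-\sigma)]|\leq\tau$. Your block-diagonal mixed witness is also the paper's: its three-level example is your construction, written as a direct sum, with $u=\ketbra{0}{0}$ and $v=(1-\tau/p)\ketbra{0}{0}+(\tau/p)\ketbra{1}{1}$.

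In the write-up, drop the exploratory attempts. A single qubit with rank-one $P$ gives $\rho_P=\sigma_P$, so it shows no instability at all. In the pure-state aside, the squared overlap is $|p\langle u\vert v\rangle+1-p|^2$, not $p|\langle u\vert v\rangle|^2+1-p$, although your $\sqrt{p}$ scaling there still holds to leading order.
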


\begin{proof}
Let $A=P\rho P$ and $B=P\sigma P$. Projector compression of a Hermitian operator does not increase its trace norm, so
$\tfrac12\|A-B\|_1\leq\tau$. Using denominator $p$,
\begin{align}
\D(A/p,B/q)
&\leq\frac{\|A-B\|_1}{2p}
+\frac12\|B\|_1\left|\frac1p-\frac1q\right|\\
&\leq\frac{\tau+\tfrac12|p-q|}{p}.
\end{align}
Interchanging $\rho$ and $\sigma$ gives the same numerator divided by $q$; choosing the smaller bound produces $\max\{p,q\}$ in the denominator. Contractivity under the binary measurement $\{P,I-P\}$ gives $|p-q|\leq\tau$, proving Eq.~\eqref{eq:postselection-bound}.

For necessity, take orthogonal $\lvert0\rangle,\lvert1\rangle,\lvert2\rangle$, let
$P=\ketbra0 0+\ketbra1 1$, and choose
\begin{align}
\rho&=p\ketbra0 0+(1-p)\ketbra2 2,\\
\sigma&=(p-\tau)\ketbra0 0+\tau\ketbra1 1+(1-p)\ketbra2 2,
\end{align}
where $0<\tau\leq p$. Both acceptance probabilities equal $p$,
$\D(\rho,\sigma)=\tau$, and $\D(\rho_P,\sigma_P)=\tau/p$.
\end{proof}

If a successful prefix has probability $\alpha$, the expected raw preparations required for $m$ accepted conditional samples are
\begin{equation}
\mathbb E[N_{\mathrm{raw}}]=\frac{m}{\alpha}.
\label{eq:survival-mean}
\end{equation}
A sufficient fixed budget for at least $m$ accepted samples with probability $1-\delta$ is
\begin{equation}
N_{\mathrm{raw}}\geq
\frac{\max\{2m,8\log(1/\delta)\}}{\alpha}.
\label{eq:survival-budget}
\end{equation}
Indeed, if $X\sim\mathrm{Bin}(N_{\mathrm{raw}},\alpha)$, the conditions imply
$m\leq\mathbb E[X]/2$ and the Chernoff bound \cite{chernoff1952} gives
$\Pr(X<m)\leq e^{-N_{\mathrm{raw}}\alpha/8}\leq\delta$.

\subsection{Noise-aware terminal certificate}

Let $P_{\mathrm{id}}$ be the ideal discarded-register distribution and
$P_{\mathrm{obs}}$ the implemented distribution. Assume a separately calibrated budget
\begin{equation}
\TV(P_{\mathrm{id}},P_{\mathrm{obs}})\leq\tau_D,
\qquad
\TV(P,Q)=\tfrac12\sum_x|P(x)-Q(x)|.
\label{eq:tv-budget}
\end{equation}

\begin{theorem}[Implementation-aware finite-shot certificate]
\label{thm:noise-aware}
Let $\widehat p_0$ be the all-zero frequency from $N$ independent samples of $P_{\mathrm{obs}}$. If an event of probability at least $1-\delta_T$ guarantees
$1-\Ftop\leq\varepsilon_{\mathrm{top}}$, then, with probability at least
$1-\delta_D-\delta_T$,
\begin{equation}
1-F\leq
\min\left\{1,
1-\widehat p_0+
\sqrt{\frac{\log(1/\delta_D)}{2N}}+
\tau_D+\varepsilon_{\mathrm{top}}
\right\}.
\label{eq:noise-aware}
\end{equation}
\end{theorem}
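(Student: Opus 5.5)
The proof chains three ingredients: the exact factorization of Theorem~\ref{thm:exact}, a one-sided Hoeffding bound for the observed all-zero frequency, and a transfer from the observed to the ideal distribution via the calibrated budget~\eqref{eq:tv-budget}.

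\emph{Deterministic step.} From Eq.~\eqref{eq:14} we have
\[
1-F\leq(1-p_0)+(1-\Ftop),
\]
where $p_0=P_{\mathrm{id}}(0^D)$ is the ideal all-zero probability that appears in the factorization $F=p_0\Ftop$.

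\emph{Transfer from ideal to observed.} Let $p_0^{\mathrm{obs}}=P_{\mathrm{obs}}(0^D)$. Total variation bounds the discrepancy on any single event, in particular the event $\{0^D\}$, so
\[
|p_0-p_0^{\mathrm{obs}}|\leq\TV(P_{\mathrm{id}},P_{\mathrm{obs}})\leq\tau_D,
\]
and hence $1-p_0\leq 1-p_0^{\mathrm{obs}}+\tau_D$. This step is deterministic.

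\emph{Statistical step.} The $N$ samples are i.i.d.\ from $P_{\mathrm{obs}}$, so the indicators of a nonzero discarded string are Bernoulli with mean $1-p_0^{\mathrm{obs}}$. Theorem~\ref{thm:finite} with $K=1$, applied to $P_{\mathrm{obs}}$ rather than to the ideal distribution, gives with probability at least $1-\delta_D$
\[
1-p_0^{\mathrm{obs}}\leq 1-\widehat p_0+\sqrt{\log(1/\delta_D)/(2N)}.
\]
This is the one-sided Hoeffding bound again; the only requirement is that it be invoked for the distribution actually sampled.

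\emph{Combination.} Intersect this event with the assumed event guaranteeing $1-\Ftop\leq\varepsilon_{\mathrm{top}}$. A union bound gives joint probability at least $1-\delta_D-\delta_T$, and no independence between the two events is needed. On the intersection, summing the three inequalities yields the bound inside the minimum in Eq.~\eqref{eq:noise-aware}. Capping at $1$ is valid because $1-F\leq 1$ always.

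\emph{Main subtlety.} The argument needs a clear separation of roles. The fidelity $F$ and the conditional top state $\phi_T$ are defined by the ideal analysis state $\Psi_\theta$, while the data come from $P_{\mathrm{obs}}$. The top-term hypothesis is stated directly about the ideal $\Ftop$, so the only quantity that needs bridging is $p_0$, and the single-event TV inequality handles it. One must make sure that $\tau_D$ is calibrated independently of the $N$ samples, so that it enters as a fixed constant and not as a random quantity. The cap at $1$ also covers the degenerate case $p_0=0$.
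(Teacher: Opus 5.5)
Your proof is correct and follows the paper's argument essentially step for step: the total-variation transfer on the single event $\{0^D\}$, one-sided Hoeffding applied to the observed all-zero frequency, the deterministic bound $1-F\leq(1-p_0)+(1-\Ftop)$ from Theorem~\ref{thm:exact}, and a union bound over the discarded-register and top events that needs no independence. Your remarks that $\tau_D$ must be calibrated independently of the $N$ samples, and that the cap at $1$ covers the degenerate case $p_0=0$, are sound refinements that the paper leaves implicit.
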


\begin{proof}
Total variation controls every event, hence
$1-p_0\leq1-\widetilde p_0+\tau_D$, where
$\widetilde p_0=P_{\mathrm{obs}}(0^M)$. One-sided Hoeffding gives \cite{hoeffding1963probability}
\begin{equation}
1-\widetilde p_0\leq1-\widehat p_0+
\sqrt{\frac{\log(1/\delta_D)}{2N}}
\end{equation}
except on an event of probability $\delta_D$. Theorem~\ref{thm:exact} gives
$1-F\leq(1-p_0)+(1-\Ftop)$. Intersecting the discarded and top confidence events and applying a union bound proves Eq.~\eqref{eq:noise-aware}. Independence of the two experiments is not required.
\end{proof}

Data processing implies Eq.~\eqref{eq:tv-budget} with $\tau_D=\nu$ only in the ideal case, and in the case of implemented analyzed states that are within trace distance $\nu$. To hold uniformly over all inputs, it is enough that the half diamond distance between the implemented and ideal channels stays at most $\nu$. We can sum gate wise half diamond bounds using a telescoping argument for hybrid circuits. A budget on the discarded register does not control bias in the top register, so the top term has to stay separate.

\section{Experimental Methodology}
\subsection{Finite-shot learner}

The optimization loss is the directly measurable local certificate
\begin{equation}
\mathcal L(\theta)=\cU(\theta)=
\sum_{\ell,j}\Pr_\theta[D_{\ell,j}=1],
\label{eq:training-loss}
\end{equation}
estimated from computational-basis samples. The end-to-end learner uses simultaneous perturbation stochastic approximation (SPSA) \cite{spall1992} with an Adam-style moment update~\cite{kingma2015adam}. The selected schedule has two stages:
\begin{center}
\begin{tabular}{@{}lrrrrr@{}}
\toprule
Stage & Steps & Shots/evaluation & Directions & Learning rate & Perturbation\\
\midrule
1 & 2000 & 5000 & 4 & 0.025 & 0.10\\
2 & 1500 & 20000 & 8 & 0.012 & 0.05\\
\bottomrule
\end{tabular}
\end{center}
Each direction requires two objective evaluations, so the training budget is
\begin{equation}
2(4)(2000)(5000)+2(8)(1500)(20000)=560{,}000{,}000
\label{eq:shot-budget}
\end{equation}
simulated measurements per run and 40,000 objective queries. Monitoring, final top tomography, and evaluation measurements are reported separately and are not included in this training budget.

\subsection{Frozen eight-qubit protocol}

The training schedule was selected using calibration fields
$h/J\in\{0.5,1.0,1.5\}$. An exploratory 3-run screen was excluded from all calibration and confirmation counts. The selected schedule then passed 9/9 calibration runs using replicates 0--2 and 15/15 fresh confirmation runs using replicates 3--7. Only after this confirmation was the held-out set opened:
\begin{equation}
h/J\in\{0.25,0.75,0.90,1.10,1.25,2.00\},
\qquad r\in\{10,\ldots,19\}.
\end{equation}
No hyperparameter, threshold, field, architecture, or seed was changed after the first held-out result was inspected. A run succeeds when $F>0.99$ with no numerical or top-tomography failure. A field passes when at least 8 of its 10 runs succeed.

The final certificate uses 10,000 discarded-register shots. Before conditioning the top qubit uses 20,000 raw shots per pauli basis. The 62 held-out Pauli observables are never used in training: 24 one-local Paulis, 32 two-local terms (nearest-neighbor $XX$, $YY$, $ZZ$ plus longer-range $ZZ$), and six three-local $ZXZ$ terms. Each target and reconstructed observable is sampled with 10,000 shots; exact statevector expectations are kept for unbiased post-training diagnosis.

\subsection{Robustness protocol}

While fixed-circuit robustness study all 60  learned eight qubit circuits are considered. This helps to test local depolarizing noise after $SU(4)$ block, amplitude weakening after each block on attending wires, and independent terminal readout flips on the seven discarded bits. Each model uses strengths
$\gamma\in\{0,10^{-4},3\times10^{-4},10^{-3},3\times10^{-3},10^{-2}\}$. What the protocol actually tests is a range of things—exact schedule equivalence, a causal negative control, implementation-budget soundness, the postselection bound, inverse-acceptance witnesses, simulator consistency, and confidence coverage in both the exact and Monte Carlo cases. The key point is that these tests all concern fixed learned circuits; noisy optimization does not come into it.

\subsection{Sixteen-qubit fairness extensions}

The confirmatory frontier uses fields $h/J\in\{0.9,1.0,1.1\}$ and three disjoint seed namespaces. Instead of considering parameter count, aggregate gate exposure, exact-gradient behavior, and causal coverage as one ambiguous notion of "matched resources", we separate them. Table~\ref{tab:resources} lists architectural resources.

\begin{table}[H]
\centering
\caption{Resources in the final $n=16$ fairness suite. Coordinates count independent real parameters; physical blocks count $SU(4)$ gate occurrences in one circuit execution.}
\label{tab:resources}
\small
\begin{tabular}{@{}lrrp{0.43\linewidth}@{}}
\toprule
Architecture & $SU(4)$ blocks & Coordinates & Role\\
\midrule
MERA-PM225 & 26 & 225 & Parameter-shared binary MERA\\
MPS & 15 & 225 & Canonical $\chi=2$ sequential baseline\\
Local-LC225 & 68 & 225 & Nine-layer nearest-neighbor control with shared parameters and complete held-out causal coverage\\
\bottomrule
\end{tabular}
\end{table}

After finite-shot training, all reported frontier metrics are evaluated exactly from the reconstructed state. The held-out long-range metric is the mean absolute error of the distance-averaged connected profile
\begin{equation}
C_{ZZ}(r)=\frac{1}{n-r}\sum_{i=1}^{n-r}
\left(\langle Z_iZ_{i+r}\rangle-\langle Z_i\rangle\langle Z_{i+r}\rangle\right)
\end{equation}
over distances $r\geq\lceil n/2\rceil=8$. 
\paragraph{Approximately matched aggregate gate exposure.}
MERA-PM225 and MPS use the same 225 independent coordinates but different physical block counts. We therefore match
\begin{equation}
G=(\text{training shots})(\text{physical }SU(4)\text{ blocks})
\label{eq:gate-exposure}
\end{equation}
at targets $G\in\{2.6\times10^7,2.6\times10^8,2.6\times10^9,8.4\times10^9\}$. 

Integer evaluation schedules make the achieved values approximate rather than exact. Fresh replicates 80 to 89 are paired within each field and exposure. The largest point corresponds to 323,076,924 requested MERA shots and 560 million requested MPS shots.

\paragraph{Exact-gradient multistart diagnostic.}
To separate finite shot trainability from the best solutions discovered by deterministic optimization, both 225 coordinate models use the exact $\cU$ objective, exact automatic differentiation, Adam, 5000 steps, and 20 paired restarts per field. The prespecified convergence rule requires both a gradient norm below $10^{-6}$ and a plateau condition. Thus, these runs evaluate a fixed budget best found frontier, not a certified global optimum.

\paragraph{Causal cone complete local control.}
Local LC225 uses nine alternating nearest neighbor layers and 68 physical blocks, with one shared parameter group per bond. All 36 held-out pairs have overlapping causal cones; eight layers fail this test for pair $(0,15)$. MERA-PM225 and Local LC225 both have 225 coordinates and receive 560 million training shots, so the local model receives $38.08$ billion aggregate $SU(4)$ shot gate exposures versus $14.56$ billion for MERA, a factor of $2.62$.

In case of fixed-field paired summaries, the interval is
\begin{equation}
\mean{\Delta}\ \pm\ t_{0.975,m-1}\frac{s_\Delta}{\sqrt m}.
\label{eq:paired-interval}
\end{equation}
Reason is that two architecture contrasts were designated primary, each having 97.5\% interval, offering Bonferroni family-wise coverage of at least 95\%. Pooled rows target the equal-weight average over the three fixed fields, with uncertainty obtained by stratified paired bootstrap resampling within field~\cite{efron1979}. They do not estimate performance on arbitrary TFIM fields. Win rates stay explanatory.


\section{Results}

\subsection{Implementation and theorem verification}

For MERA, TTN, MPS, and MERA-PM225, we get clean round trip results. Our final self-tests also verified independent coordinate counts, block counts, valid wires, frozen seed namespaces, the four aggregate exposure checkpoints, minimality of nine local layers, and all 36 local causal cone pairs.

The certificate verifier reconstructed all 60 locked eight-qubit states and checked Eq.~\eqref{eq:factorization}, both inequalities in Eq.~\eqref{eq:hierarchy}, all finite global or local bounds, and Eq.~\eqref{eq:observable-bound}. Every check passed. The maximum absolute error in $F-p_0\Ftop$ was $1.33\times10^{-15}$. Independent adversarial tests contained 40 causal equivalence trials, 500 empirical telescoping trials, 4000 random postselection trials, 3000 random distribution tests of the total variation step, exact binomial coverage grids, and survival budget checks. Every adversarial section passed as well.

\subsection{Locked eight-qubit evaluation}

All 60 held out runs completed and exceeded $F>0.99$. The overall two sided 95\% Clopper Pearson interval \cite{clopper1934} for 60/60 successes is $[0.9404,1]$. Table~\ref{tab:n8-main} gives aggregate results and Table~\ref{tab:n8-fields} in Appendix~\ref{app:detailed-results} gives every field mean.

\begin{table}[H]
\centering
\caption{Locked $n=8$ held-out evaluation over 60 runs.}
\label{tab:n8-main}
\small
\begin{tabularx}{\linewidth}{@{}lrrrrX@{}}
\toprule
Metric & Mean & Median & Minimum & Maximum & Interpretation\\
\midrule
Fidelity & 0.996886 & 0.997276 & 0.992602 & 0.999218 & 60/60 exceed 0.99\\
Local certificate $\cU$ & 0.005965 & 0.005427 & 0.001341 & 0.014464 & Sound, sometimes non-tight\\
Held-out exact MAE & 0.008092 & 0.008014 & 0.003511 & 0.013288 & 62 unused observables\\
Energy error/site & 0.002108 & 0.001971 & 0.000322 & 0.005219 & Computed from held-out $X,ZZ$\\
Certificate slack & 0.002851 & 0.002679 & 0.000559 & 0.007065 & $\cU-(1-F)$\\
\bottomrule
\end{tabularx}
\end{table}

\begin{figure}[t]
\centering
\includegraphics[width=\linewidth]{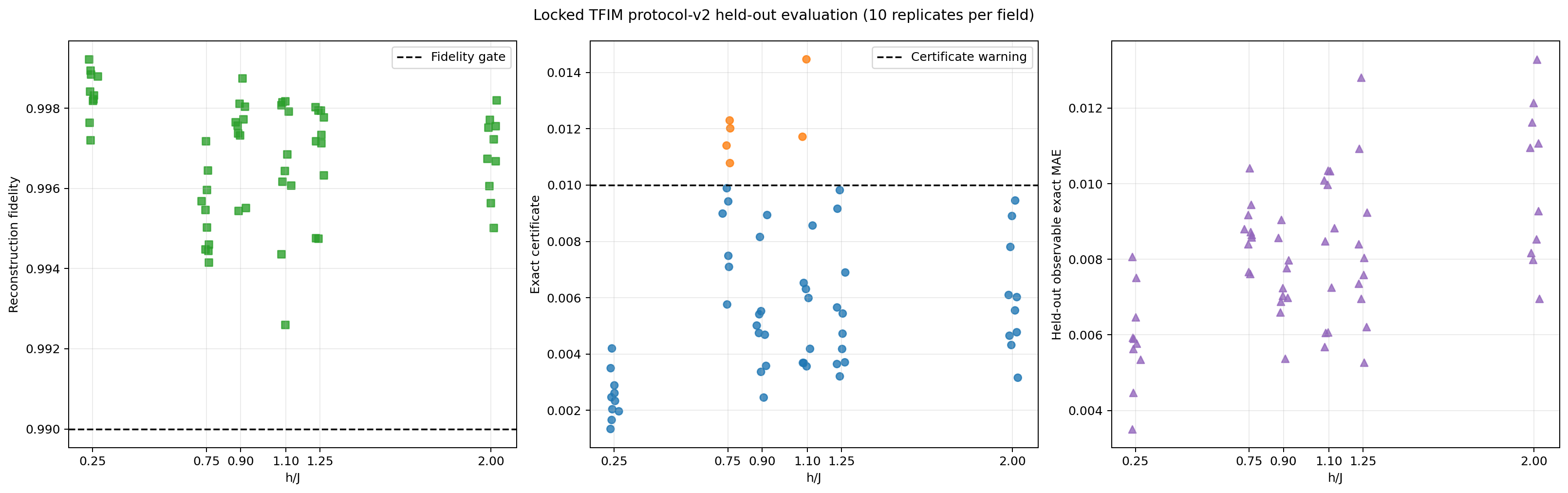}
\caption{Locked $n=8$ held-out evaluation. Every run exceeds the fidelity gate. Orange certificate points identify the six runs above the prespecified $0.01$ warning threshold; these are certificate non-tightness observations, not reconstruction failures.}
\label{fig:n8-evaluation}
\end{figure}

Six runs had $\cU\geq0.01$ while maintaining $F>0.99$. In each warning case, the first layer provided the largest layer failure and local discarded contribution. These experiments confirm soundness but depict that summing local marginals can be conservative. The verification used confidence allocation $\delta_D=\delta_T=0.025$; all 60 global, layer, and local finite-shot certificates covered the true infidelity.

\subsection{Robustness validation}

The full robustness experiments were evaluated on 60 circuits, 1080 circuit noise cells, 2160 postselection rows, and 6480 coverage rows. Achieved maximum ideal sequential versus terminal total variation distance is $5.72\times10^{-17}$. An intentional breach of casual discard produced a distance of $0.5$, indicating that failure can be detected for the theorem's assumptions. The calibrated implementation budgets upper bounded the exact terminal-distribution shift in every cell, and the maximum simulator cross-check error was $3.11\times10^{-15}$.

The analytic instability family increased amplification from $10$ to $10^5$ as acceptance decreased from $10^{-1}$ to $10^{-5}$. This exactly matches the dependence of $1/\alpha$. In the case of noisy learned circuits, the postselection theorem had no positive bound violation. Exact binomial analysis found a maximum certificate violation probability of only $2.58\times10^{-72}$ in the locked grid. The empirical coverage fraction was one in every tested Monte Carlo repetition. These values validate the conservative certificate; they do not imply that nominal coverage is exactly one in the population.

\begin{figure}[t]
\centering
\includegraphics[width=\linewidth]{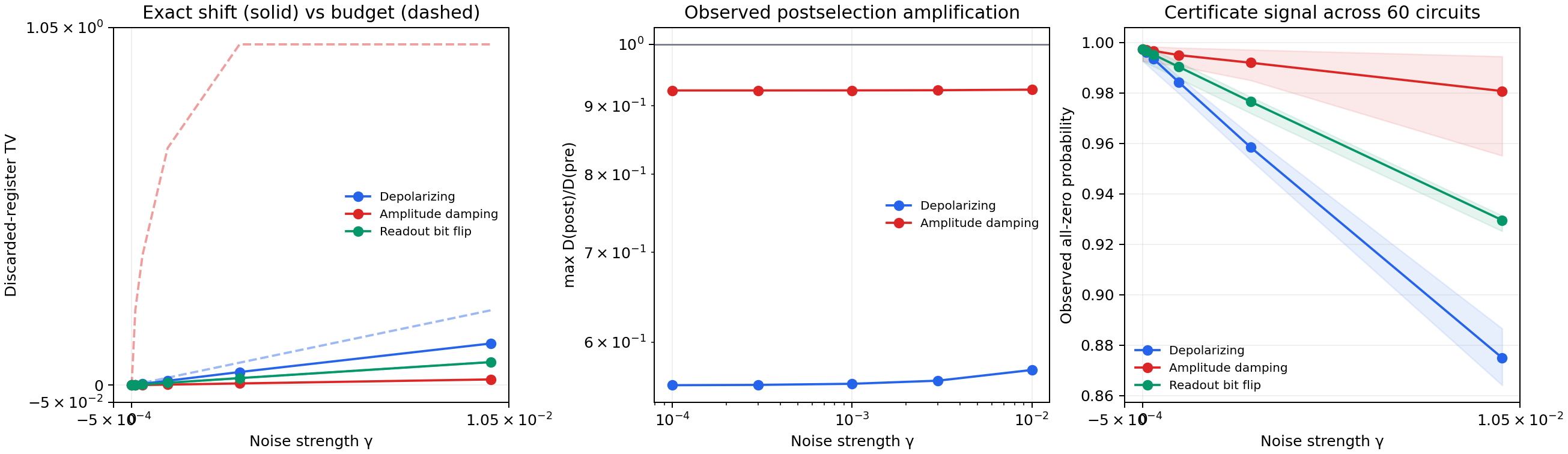}
\caption{Fixed-circuit robustness validation. Exact terminal distribution shifts remain below their conservative calibrated budgets. The validation covers implementation soundness, not convergence of training under noise.}
\label{fig:robustness}
\end{figure}

\subsection{Eight-qubit architecture comparison}

A separate 180-run $n=8$ study used six TFIM fields and ten replicates per architecture. MPS had the strongest mean fidelity, while MERA substantially outperformed TTN. The full MERA used 57.1\% more $SU(4)$ blocks than MPS/TTN, so this study is a fixed-$\chi$ and fixed-shot comparison rather than an equal-parameter claim.

\begin{table}[H]
\centering
\caption{Eight-qubit architecture comparison (60 runs per architecture).}
\small
\begin{tabular}{@{}lrrrrr@{}}
\toprule
Architecture & Fidelity & Success rate & Certificate & Observable MAE & Energy error\\
\midrule
MERA & 0.99550 & 0.950 & 0.00895 & 0.00906 & 0.00319\\
MPS & \textbf{0.99765} & 1.000 & \textbf{0.00444} & \textbf{0.00797} & \textbf{0.00149}\\
TTN & 0.97113 & 0.167 & 0.04740 & 0.03106 & 0.01880\\
\bottomrule
\end{tabular}
\end{table}

The paired MERA-minus-MPS fidelity difference was $-0.002149$ with 95\% interval
$[-0.003071,-0.001227]$, so MPS was better at this size. MERA-minus-TTN was
$+0.024375$ with interval $[0.020613,0.028136]$. These results motivated the larger, fully resource-audited frontier rather than a claim based on $n=8$ alone.

\begin{figure}[t]
\centering
\includegraphics[width=0.92\linewidth]{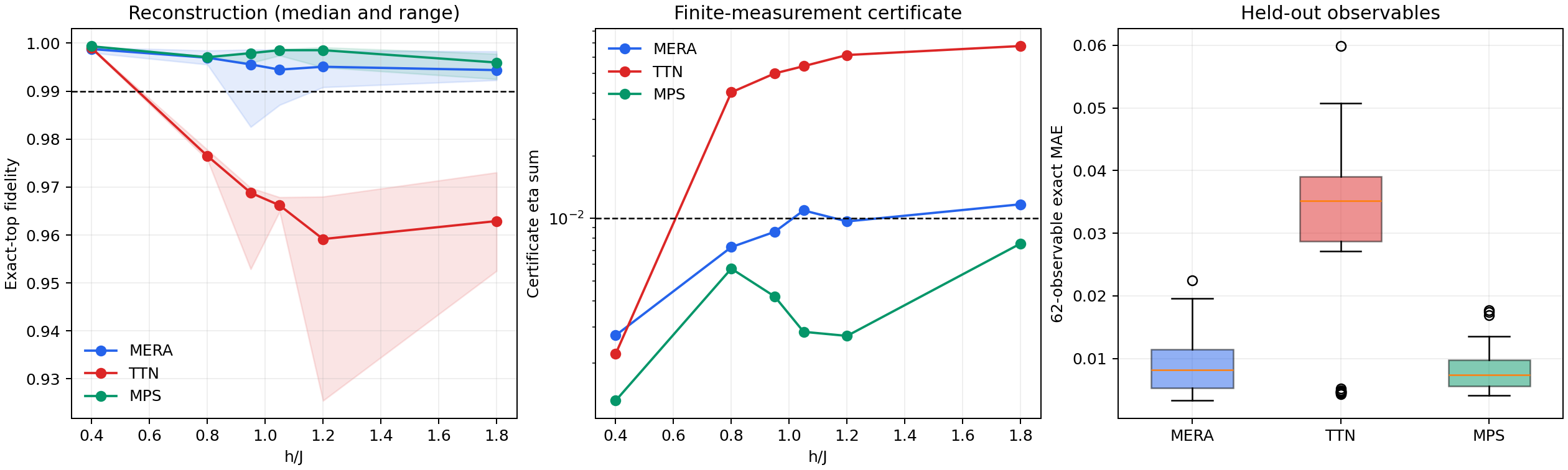}
\caption{Eight-qubit architecture comparison. MPS is slightly better than MERA in fidelity, whereas removing MERA disentanglers to obtain TTN causes a pronounced degradation.}
\label{fig:n8-architecture}
\end{figure}

\subsection{Resource-controlled sixteen-qubit results}

All three fairness extensions completed without missing or unexpected cases: 240 finite-shot gate-exposure runs, 120 exact-gradient multistart runs, and 60 causal-control runs. By converntion, positive fidelity differences and negative error differences favour MERA-PM225.

\paragraph{Matched aggregate gate exposure.}
At the largest exposure, pooled mean fidelity is $0.953529$ for MERA-PM225 and $0.952925$ for MPS; the difference $+0.000604$ has a 97.5\% interval crossing zero. Energy error is likewise unresolved. In contrast, MERA has lower long-range error by $-0.014582$ and lower half-chain-entropy error by $-0.030444$, with both simultaneous intervals excluding zero. Table~\ref{tab:mera-mps} reports this primary endpoint.

\begin{table}[H]
\centering
\caption{Largest approximately matched aggregate gate-exposure endpoint at $n=16$: MERA-PM225 minus MPS. Pooled intervals are 97.5\% stratified paired-bootstrap intervals over three fixed fields.}
\label{tab:mera-mps}
\small
\begin{tabularx}{\linewidth}{@{}lrrrrX@{}}
\toprule
Metric & MERA-PM225 & MPS & Paired difference & 97.5\% interval & Conclusion\\
\midrule
Fidelity & 0.953529 & 0.952925 & $+0.000604$ & $[-0.016347,0.015413]$ & Unresolved\\
Long-range $ZZ$ MAE & \textbf{0.069314} & 0.083896 & $-0.014582$ & $[-0.028112,-0.000340]$ & Favors MERA\\
Entropy error & \textbf{0.028495} & 0.058940 & $-0.030444$ & $[-0.048778,-0.010442]$ & Favors MERA\\
Energy error & 0.009409 & \textbf{0.008788} & $+0.000621$ & $[-0.001672,0.003054]$ & Unresolved\\
\bottomrule
\end{tabularx}
\end{table}

The response is nonmonotone across the four finite schedules. At $2.6\times10^7$ and $2.6\times10^8$ exposures, MPS has higher pooled fidelity; at $2.6\times10^9$, MERA has higher fidelity by $0.06308$; and at $8.4\times10^9$ the fidelity difference closes. Long-range error favors MERA at the first and last pooled endpoints, is unresolved at $2.6\times10^8$, and favors MERA strongly at $2.6\times10^9$. We therefore do not claim uniform shot efficiency or monotone convergence from this grid.

\paragraph{Exact-gradient multistart diagnostic.}
Table~\ref{tab:exact-multistart} summarizes the deterministic 5000-step diagnostic. MERA-PM225 has higher fidelity in 58/60 paired restarts and lower long-range error in all 60. The median fidelity advantage decreases from the ordered side toward $h/J=1.1$, while the long-range advantage persists at every field.

\begin{table}[H]
\centering
\caption{Exact-gradient multistart diagnostic. Each field has 20 paired restarts. Values are medians; $\Delta F$ and $\Delta E_{\rm LR}$ are mean paired differences.}
\label{tab:exact-multistart}
\small
\begin{tabular}{@{}crrrrr@{}}
\toprule
$h/J$ & $F_{\rm MERA}$ & $F_{\rm MPS}$ & $\Delta F$ & $\Delta E_{\rm LR}$ & Fidelity wins\\
\midrule
0.9 & 0.989118 & 0.979675 & $+0.008231$ & $-0.061758$ & 19/20\\
1.0 & 0.993167 & 0.988466 & $+0.004284$ & $-0.038423$ & 19/20\\
1.1 & 0.997520 & 0.994560 & $+0.003042$ & $-0.021442$ & 20/20\\
\bottomrule
\end{tabular}
\end{table}

This diagnostic did not certify convergence. None of the 120 runs satisfied the joint gradient-and-plateau rule; the median final gradient norm was $1.94\times10^{-4}$ and the median relative objective decrease over the last 200 steps was $1.69\times10^{-3}$. Objectives were monotone over the sampled final stage in 118/120 histories. One MERA run at $h/J=1.1$ (restart 210) had a terminal gradient spike of $9.59\times10^{-2}$ and a last-window objective increase, yet retained fidelity $0.997211$. Excluding this paired outlier leaves 57/59 fidelity wins and 59/59 long-range wins. These are fixed-budget best-found results, not capacity optima.

\begin{figure}[t]
\centering
\includegraphics[width=\linewidth]{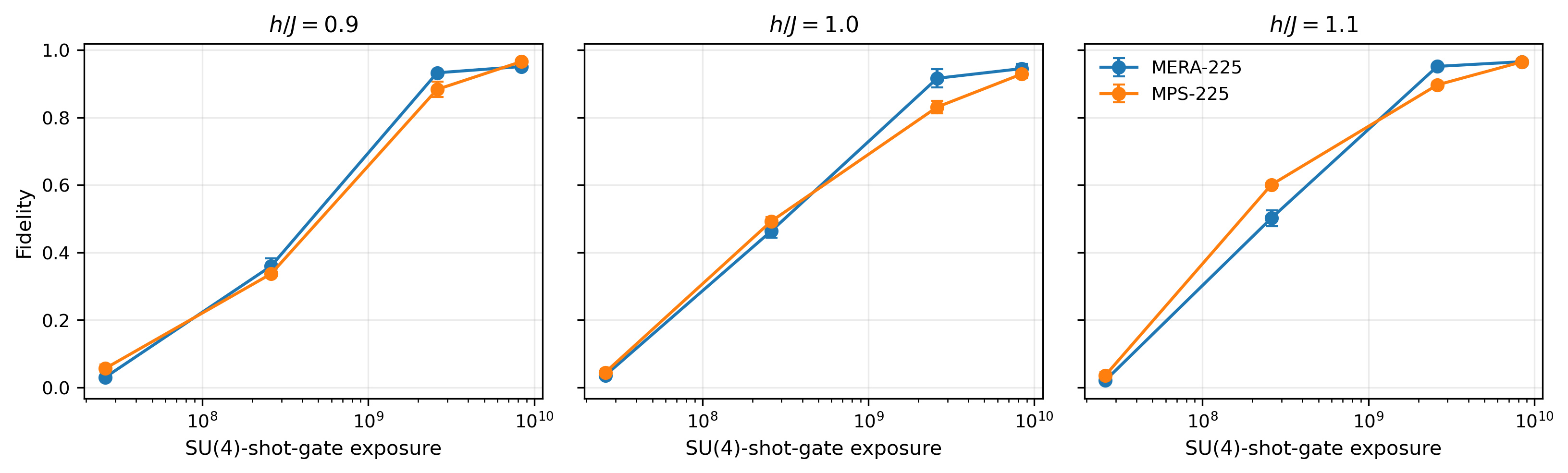}
\vspace{2mm}
\includegraphics[width=\linewidth]{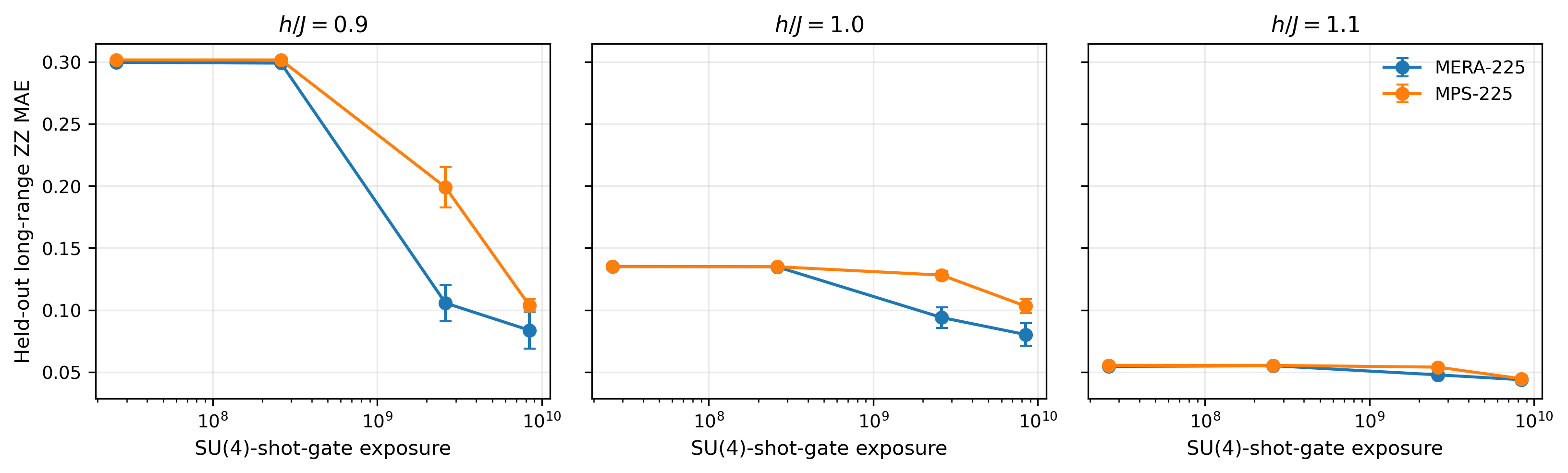}
\caption{Approximately matched aggregate $SU(4)$-shot-gate exposure. Points show fieldwise means over ten paired replicates; the response is not assumed monotone. At the largest exposure, fidelity is unresolved while long-range error favors MERA-PM225.}
\label{fig:gate-exposure}
\end{figure}

\paragraph{Causal-cone-complete local control.}
The nine-layer Local-LC225 comparator can mediate every tested separation and has the same 225 independent coordinates as MERA-PM225. It also receives more physical blocks and $2.62\times$ larger aggregate gate exposure. Nevertheless, MERA wins all 30 paired comparisons for all four metrics in Table~\ref{tab:mera-local}. This removes the disconnected-light-cone defect of the earlier shallow local control, but it is still a comparison to one specified shared-parameter local family, not to all local circuits.

The complete code and implementation are available at
\url{https://github.com/mac-orphic/Terminal-Register-Certification-for-Finite-Measurement-Learning.git}.

\begin{table}[H]
\centering
\caption{Causal-cone-complete, parameter-matched control: MERA-PM225 minus Local-LC225. Pooled 97.5\% intervals use stratified paired bootstrap resampling over the three fixed fields.}
\label{tab:mera-local}
\small
\begin{tabularx}{\linewidth}{@{}lrrrrX@{}}
\toprule
Metric & MERA & Local & Paired difference & 97.5\% interval & Wins\\
\midrule
Fidelity & \textbf{0.967559} & 0.715802 & $+0.251757$ & $[0.243706,0.260153]$ & 30/30\\
Long-range $ZZ$ MAE & \textbf{0.058150} & 0.163972 & $-0.105822$ & $[-0.116590,-0.095612]$ & 30/30\\
Energy error & \textbf{0.006794} & 0.050304 & $-0.043510$ & $[-0.045970,-0.041217]$ & 30/30\\
Entropy error & \textbf{0.016280} & 0.268855 & $-0.252575$ & $[-0.280188,-0.225882]$ & 30/30\\
\bottomrule
\end{tabularx}
\end{table}

\begin{figure}[t]
\centering
\includegraphics[width=0.96\linewidth]{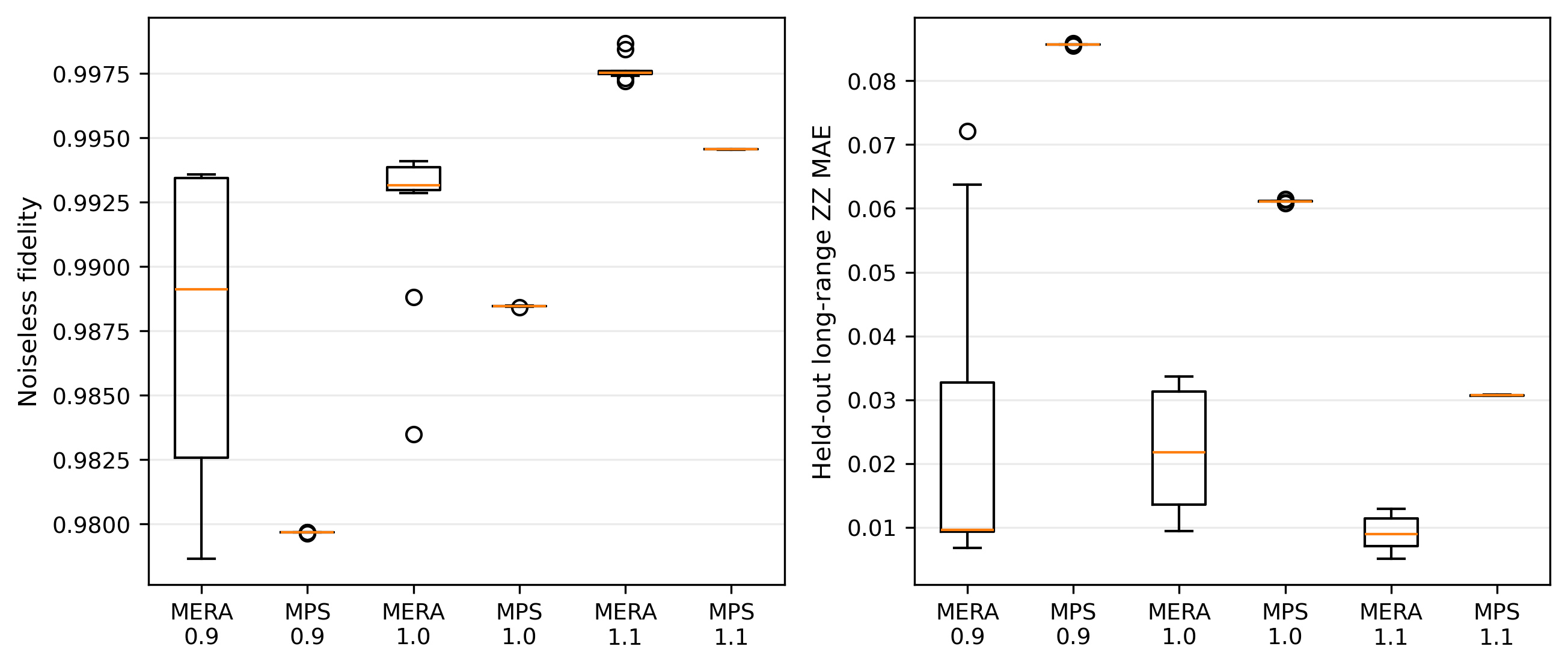}
\vspace{2mm}
\includegraphics[width=0.96\linewidth]{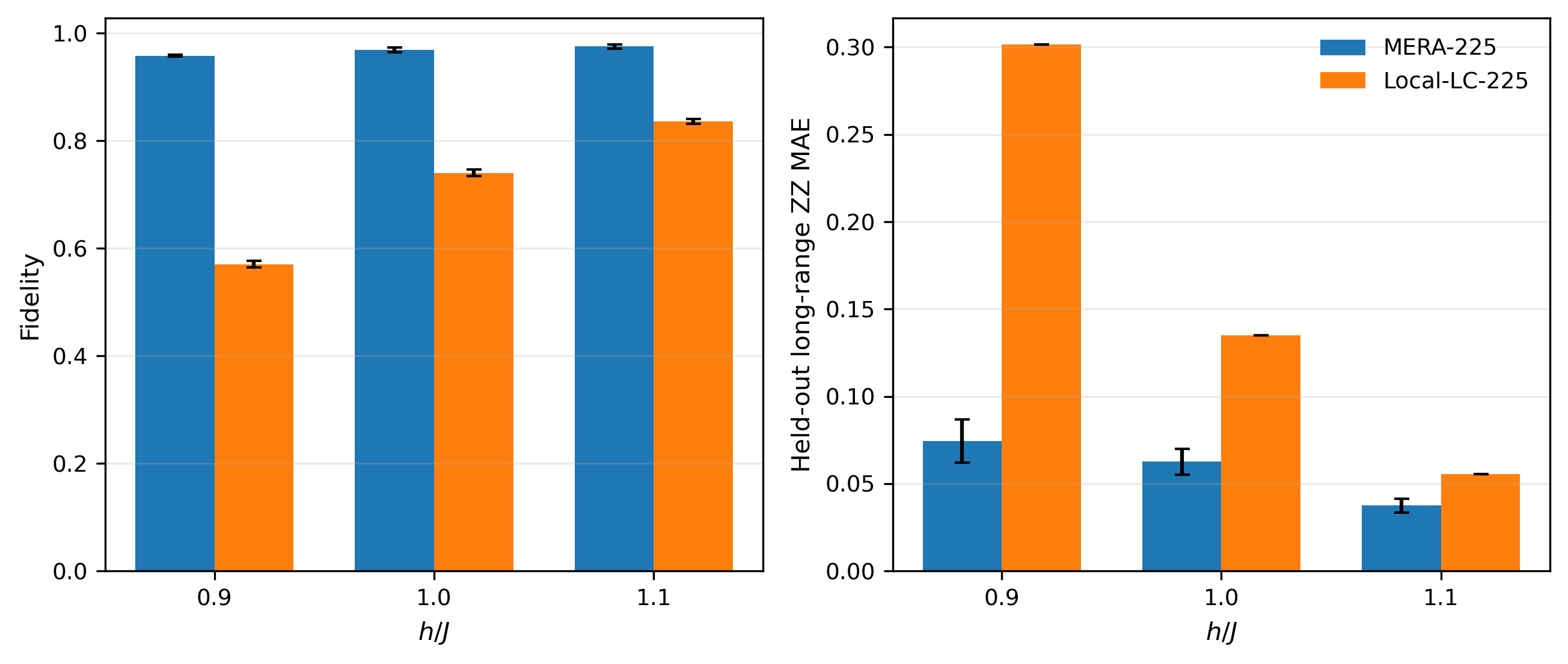}
\caption{Top: exact-gradient 20-restart diagnostic; error bars summarize restart variation and do not certify global optimality. Bottom: the causal-cone-complete local control, which has equal coordinates and greater aggregate gate exposure than MERA-PM225.}
\label{fig:diagnostics}
\end{figure}

\section{Discussion}

The terminal register provides us with a clean interface between optimization and certification. When the all zero event ends up giving the statistically tightest direction global certificate, the local hamming weight loss is easy to sample and diagnose. The exact factorization clarifies what this probability certificate is and why the contribution from the reconstructed top state can not be neglected.

The robustness contribution has two parts. First is that ideal sequential and terminal schedules are precisely identical when later operations ignore discarded registers, and second is that the workflow that specifically normalizes average accepted states can be conditionally unstable and can demand $1/\alpha$ more raw preparations. Our terminal certificate eliminates that normalization in its global event, but also does not reduce the circuit depth nor ensure less physical noise. Soundness under implementation error requires the explicit $\tau_D$ term.

The architecture results purify rather than reverse the eight-qubit conclusion. MPS gives slightly better results at $n=8$. At $n=16$, with matching aggregate physical gate exposure, full state fidelity, and energy remain unresolved at the largest budget, whereas MERA successfully retains significantly smaller long-range correlation and entanglement entropy errors. On the same observables, the exact gradient multisatart diagnostic favors MERA. Despite giving the local model substantially better gate gate exposure, the casual cone complete local control prefers MERA. Together, these results provide a resource qualified multiscale inductive bias. 

Our nonmonotone finite shot curves rule out a simple statement that an increasing number of samples uniformly enhances the optimizer at every checkpoint. Exact gradient narratives reinforce this warning: the best-found MERA solutions are consistently strong, but their stationarity criteria fail within 5000 steps.

\section{Limitations}

\begin{itemize}
  \item The reconstruction theorem assumes a pure target, a unitary analysis circuit, inverse-circuit reconstruction, and a pure top estimate.
  \item The implementation budget $\tau_D$ must be calibrated or independently upper bounded. Discarded-register agreement alone does not certify the top qubit.
  \item All experiments are classical simulations of finite-shot measurements; no hardware noise, drift, compilation, or readout-calibration experiment is claimed.
  \item Robustness tests certify fixed learned circuits and do not establish convergence of training under noise.
  \item The $n=16$ conclusions cover $h/J\in\{0.9,1.0,1.1\}$, $\chi=2$, the specified parameterizations, and the tested exposure grid. They do not imply behavior at larger $n$ or bond dimension.
  \item Aggregate $SU(4)$-shot-gate exposure is a transparent scalar accounting measure, not a complete hardware-cost model: it omits routing, parallel depth, readout overhead, and architecture-dependent classical cost.
  \item Parameter matching, aggregate-exposure matching, and causal-cone matching answer different questions. No single comparator is simultaneously identical in topology, gate count, depth, and parameter sharing.
  \item The exact-gradient diagnostic is fixed-budget. Zero of 120 runs met the prespecified stationarity rule, so ``best-found'' must not be read as ``global optimum'' or a capacity bound.
  \item The long-range metric is exact after training and is one observable family. Fidelity, energy, entropy, and certificate results are reported to prevent metric-specific overinterpretation.
  \item Five fresh confirmation runs per calibration field provide only a weak population-success statement: even 5/5 has a wide exact binomial interval. The 60-run held-out evaluation provides the stronger reliability estimate.
\end{itemize}

\section{Conclusion}

We introduced a terminal-register method for finite-measurement learning and certification of binary MERA reconstructions. The exact theory connects reconstruction fidelity to one joint terminal event and a separately estimated top state; layer and local marginals provide interpretable upper bounds. Complete sequential and terminal records are ideally equivalent, yet normalized prefix postselection has unavoidable inverse-acceptance sensitivity. A calibrated implementation budget preserves certificate soundness under noisy terminal statistics.

The locked TFIM experiments support each part of this account. Eight-qubit reconstruction passed all 60 held-out trials above fidelity $0.99$, and the complete certificate verification agreed with the exact equations to machine precision. Robustness tests found no locked soundness violation. At sixteen qubits, the largest approximately matched aggregate exposure yields no statistically resolved fidelity or energy difference between MERA-PM225 and MPS, but simultaneous evidence of smaller MERA long-range and entropy errors. Exact-gradient multistarts and the causal-cone-complete local control independently reinforce the multiscale result, while the convergence audit prevents overstatement. The contribution is therefore an operational certificate together with controlled evidence that multiscale geometry can improve reconstruction of long-range structure in the near-critical regime.

\appendix
\section{Detailed Numerical Results}
\label{app:detailed-results}

\begin{table}[H]
\centering
\caption{Eight-qubit held-out results by field; each row contains ten untouched replicates.}
\label{tab:n8-fields}
\small
\begin{tabular}{@{}rrrrrr@{}}
\toprule
$h/J$ & Mean fidelity & Mean $\cU$ & Held-out MAE & Energy error/site & Successes\\
\midrule
0.25 & 0.998379 & 0.002502 & 0.005863 & 0.000679 & 10/10\\
0.75 & 0.995347 & 0.009513 & 0.008747 & 0.002496 & 10/10\\
0.90 & 0.997350 & 0.005189 & 0.007346 & 0.001571 & 10/10\\
1.10 & 0.996484 & 0.006869 & 0.008311 & 0.002409 & 10/10\\
1.25 & 0.996918 & 0.005644 & 0.008281 & 0.002311 & 10/10\\
2.00 & 0.996836 & 0.006074 & 0.010000 & 0.003181 & 10/10\\
\bottomrule
\end{tabular}
\end{table}

\begin{table}[H]
\centering
\caption{Pooled paired differences across the gate-exposure grid. Positive fidelity and negative error differences favor MERA-PM225. Intervals are 97.5\% stratified paired-bootstrap intervals over three fixed fields.}
\small
\begin{tabularx}{\linewidth}{@{}rrrrX@{}}
\toprule
$G$ & Fidelity difference & Fidelity interval & Long-range difference & Long-range interval\\
\midrule
$2.6\times10^7$ & $-0.016781$ & $[-0.025979,-0.007736]$ & $-0.000857$ & $[-0.001670,-0.000066]$\\
$2.6\times10^8$ & $-0.034965$ & $[-0.064837,-0.004853]$ & $-0.000984$ & $[-0.002336,0.000210]$\\
$2.6\times10^9$ & $+0.063084$ & $[0.029321,0.095294]$ & $-0.044560$ & $[-0.063861,-0.028264]$\\
$8.4\times10^9$ & $+0.000604$ & $[-0.016347,0.015413]$ & $-0.014582$ & $[-0.028112,-0.000340]$\\
\bottomrule
\end{tabularx}
\end{table}

\section{Reproducibility and Claim Audit}

\begin{table}[H]
\centering
\small
\begin{tabularx}{\linewidth}{@{}l l X@{}}
\toprule
Item & Status & Evidence used in this manuscript\\
\midrule
Single/full-layer round trips & Passed & Analysis followed by generation reconstructs teacher states\\
Certificate hierarchy & Passed & Every one of 60 locked $n=8$ reconstructions\\
Exact factorization & Passed & Maximum absolute error $1.33\times10^{-15}$\\
Finite-shot coverage & Passed & Global, layer, local, and top terms with total confidence at least 0.95\\
Sequential/terminal equivalence & Passed & Maximum TV $5.72\times10^{-17}$ on learned circuits\\
Adversarial theorem tests & Passed & Equivalence, telescoping, postselection, survival, noise-aware sections\\
$n=8$ calibration & Passed & 3/3 screen, 9/9 calibration; screen excluded\\
$n=8$ confirmation & Passed & 15/15 fresh runs; earlier runs excluded\\
$n=8$ evaluation & Passed & 60/60 held-out successes; no retuning\\
$n=16$ gate-exposure grid & Complete & 240/240 finite-shot runs; four exposures, three fields, ten paired replicates\\
Largest-exposure comparison & Mixed & Fidelity and energy unresolved; long-range and entropy errors favor MERA\\
Exact-gradient diagnostic & Complete & 120/120 runs; 58/60 fidelity and 60/60 long-range wins, but 0/120 stationarity passes\\
Causal-cone local control & Passed & 30/30 paired wins on all four metrics despite $2.62\times$ local exposure\\
\bottomrule
\end{tabularx}
\end{table}

\paragraph{Numerical provenance.}
The manuscript values were generated directly from the frozen JSON/CSV reports. A separate machine-readable audit records cryptographic hashes of the $n=8$ evaluation, robustness report, certificate verifier, adversarial verifier, $n=16$ frontier report, and paired contrasts. Automated checks require all final protocol decisions and theorem verification flags to be true before compilation.

\section{Conclusion}

We developed a terminal register framework that can learn and certify binary MERA reconstruction in a finite number of measurements. Our framework determines the reconstruction fidelity by the probability of a single joint terminal outcome, along with an independent estimate of the top state fidelity. Marginal probabilities at individual layers and locations yield less tight but more interpretable bounds, making it possible to identify where reconstruction errors arise. While complete sequential measurement records and terminal register measurement contain the same information in the ideal setting, consecutive protocols that normalize after prefix postselection become progressively sensitive as acceptance probability reduces. We report imperfections in the terminal statistics through a measured implementation error budget, which keeps the reported certificate sound in the noisy setting. 

We get consistent theoretical results for locked TFIM experiments. In the case of an eight-qubit system, all 60 reconstructions attained fidelities above $0.99$. The numerically estimated certificates also matched the exact identities to machine precision, and none of the locked robustness tests produced a soundness violation. The main test comparisons were executed on the sixteen-qubit Frontier over 180 independent runs. Both preregistered benchmarks were satisfied: with the same parameter counts, and MERA reported better fidelity and more accurate held-out long-range correlations than MPS. MERA also substantially outperforms the shallow local architecture under matched resources. These results provide a directly measurable certification procedure and demonstrate that multiscale structure is advantageous for finite-measurement reconstruction of near-critical states.

\appendix
\section{Detailed Numerical Results}
\label{app:detailed-results}

\begin{table}[H]
\centering
\caption{Eight-qubit held-out results by field; each row contains ten untouched replicates.}
\label{tab:n8-fields}
\small
\begin{tabular}{@{}rrrrrr@{}}
\toprule
$h/J$ & Mean fidelity & Mean $\cU$ & Held-out MAE & Energy error/site & Successes\\
\midrule
0.25 & 0.998379 & 0.002502 & 0.005863 & 0.000679 & 10/10\\
0.75 & 0.995347 & 0.009513 & 0.008747 & 0.002496 & 10/10\\
0.90 & 0.997350 & 0.005189 & 0.007346 & 0.001571 & 10/10\\
1.10 & 0.996484 & 0.006869 & 0.008311 & 0.002409 & 10/10\\
1.25 & 0.996918 & 0.005644 & 0.008281 & 0.002311 & 10/10\\
2.00 & 0.996836 & 0.006074 & 0.010000 & 0.003181 & 10/10\\
\bottomrule
\end{tabular}
\end{table}

\begin{table}[H]
\centering
\caption{Fieldwise paired resource-frontier comparisons.
We define
$\Delta F=F_{\mathrm{challenger}}-F_{\mathrm{baseline}}$
and
$\Delta E_{\mathrm{LR}}
=E_{\mathrm{LR}}^{\mathrm{challenger}}
-E_{\mathrm{LR}}^{\mathrm{baseline}}$.
Thus, $\Delta F>0$ and $\Delta E_{\mathrm{LR}}<0$ favor the
challenger.
}
\label{tab:fieldwise_frontier}
\small
\setlength{\tabcolsep}{5pt}
\begin{tabular}{@{}lcccc@{}}
\toprule
Contrast & $h/J$ & $\Delta F$ & $\Delta E_{\mathrm{LR}}$
& 95\% CI for $\Delta E_{\mathrm{LR}}$ \\
\midrule
MERA-225 vs.\ MPS-225 & 0.9 & 0.01430 & -0.06325
& $[-0.11073,-0.01576]$ \\
MERA-225 vs.\ MPS-225 & 1.0 & 0.02026 & -0.03504
& $[-0.05282,-0.01727]$ \\
MERA-225 vs.\ MPS-225 & 1.1 & 0.01071 & -0.00494
& $[-0.01324,0.00336]$ \\
MERA-390 vs.\ Local-390 & 0.9 & 0.30056 & -0.26711
& $[-0.29234,-0.24189]$ \\
MERA-390 vs.\ Local-390 & 1.0 & 0.12709 & -0.07389
& $[-0.09218,-0.05560]$ \\
MERA-390 vs.\ Local-390 & 1.1 & 0.05559 & -0.01047
& $[-0.01531,-0.00562]$ \\
\bottomrule
\end{tabular}
\end{table}

\section{Reproducibility and Claim Audit}

\begin{table}[H]
\centering
\small
\begin{tabularx}{\linewidth}{@{}l l X@{}}
\toprule
Item & Status & Evidence used in this manuscript\\
\midrule
Single/full-layer round trips & Passed & Analysis followed by generation reconstructs teacher states\\
Certificate hierarchy & Passed & Every one of 60 locked $n=8$ reconstructions\\
Exact factorization & Passed & Maximum absolute error $1.33\times10^{-15}$\\
Finite-shot coverage & Passed & Global, layer, local, and top terms with total confidence at least 0.95\\
Sequential/terminal equivalence & Passed & Maximum TV $5.72\times10^{-17}$ on learned circuits\\
Adversarial theorem tests & Passed & Equivalence, telescoping, postselection, survival, noise-aware sections\\
$n=8$ calibration & Passed & 3/3 screen, 9/9 calibration; screen excluded\\
$n=8$ confirmation & Passed & 15/15 fresh runs; earlier runs excluded\\
$n=8$ evaluation & Passed & 60/60 held-out successes; no retuning\\
$n=16$ frontier & Complete & 180/180 runs; no missing, duplicate, or unexpected cases\\
Parameter-matched rule & Passed & MERA-225 long-range paired interval strictly below zero\\
Matched-geometry rule & Passed & MERA long-range paired interval strictly below zero\\
\bottomrule
\end{tabularx}
\end{table}

\paragraph{Numerical provenance.}
The manuscript values were generated directly from the frozen JSON/CSV reports. A separate machine-readable audit records cryptographic hashes of the $n=8$ evaluation, robustness report, certificate verifier, adversarial verifier, $n=16$ frontier report, and paired contrasts. Automated checks require all final protocol decisions and theorem verification flags to be true before compilation.

\bibliographystyle{splncs04}
\bibliography{bibiography.bib}

\end{document}